\documentclass[conference]{IEEEtran}
\ifCLASSINFOpdf
\else
\fi


\usepackage{cite}

\usepackage[cmex10]{amsmath}
\interdisplaylinepenalty=2500

\usepackage{microtype}
\usepackage{graphicx}
\usepackage{subdepth}  
\usepackage{bm} 
\usepackage{booktabs} 

\usepackage{subcaption} 

\usepackage{tikz}
\usetikzlibrary{positioning}
\usetikzlibrary{shadows}
\usepackage{pgfmath}

\usepackage{multirow}

\usepackage{capt-of}

\tikzset{%
  every neuron/.style={
    circle,
    draw,
    minimum size=1cm
  },
  neuron missing/.style={
    draw=none,
    fill=none,
    yshift = 0.1cm,
    execute at begin node=\color{black}$\vdots$
  },
  neuron nothing/.style={
    draw=none,
    fill=none,
  },  
}
\tikzstyle{bias neuron}=[neuron, minimum size=12pt, line width = .5, opacity = 0.6]
\tikzstyle{rect neuron}=[line width = .8, draw = black, rectangle, text width = 4em, minimum height = 2em, text centered, scale = .5,inner sep=0pt]


\usepackage{hyperref}

\hypersetup{colorlinks = true, citecolor = blue}




\usepackage{algorithm}
\usepackage{algorithmic}

\usepackage[mathcal]{euscript} 
\usepackage[cmex10]{amsmath}
\usepackage{amsfonts,amssymb,amsthm,bbm}


\DeclareMathAlphabet{\mathbbb}{U}{bbold}{m}{n}  

\newtheorem{theorem}{Theorem}
\newtheorem{lemma}{Lemma}
\newtheorem{proposition}{Proposition}
\newtheorem{corollary}{Corollary}
\newtheorem{definition}{Definition}
\newtheorem{example}{Example}
\newtheorem{remark}{Remark}

\newcommand{\secref}[1]{Section~\ref{#1}}
\newcommand{\appref}[1]{Appendix~\ref{#1}}

\newcommand{\figref}[1]{Fig.~\ref{#1}}
\newcommand{\lemref}[1]{Lemma~\ref{#1}}
\newcommand{\thmref}[1]{Theorem~\ref{#1}}

\DeclareMathOperator*{\argmax}{arg\,max}
\DeclareMathOperator*{\argmin}{arg\,min}
\DeclareMathOperator*{\relint}{relint}
\DeclareMathOperator*{\tr}{tr}

\DeclareMathOperator*{\diag}{diag}

\newcommand{\kron}{\mathbbb{1}}

\newcommand{\p}{\partial}
\newcommand{\eps}{\epsilon}

\newcommand{\la}{\lambda}

\newcommand{\bA}{\mathbf{A}}

\newcommand{\bU}{\mathbf{U}}
\newcommand{\bV}{\mathbf{V}}

\newcommand{\bz}{{\mathbf{z}}}
\newcommand{\C}{{\mathcal{C}}}

\newcommand{\U}{{\mathcal{U}}}
\newcommand{\V}{{\mathcal{V}}}

\newcommand{\X}{{\mathcal{X}}}

\newcommand{\Y}{{\mathcal{Y}}}

\newcommand{\Z}{{\mathcal{Z}}}

\newcommand{\cN}{\mathcal{N}}
\newcommand{\cP}{\mathcal{P}}

\newcommand{\cS}{\mathcal{S}}
\newcommand{\cI}{\mathcal{I}}

\newcommand{\cU}{\U}
\newcommand{\cV}{\V}

\newcommand{\cX}{\X}

\newcommand{\cY}{\Y}

\newcommand{\fh}{\hat{f}}

\newcommand{\gh}{\hat{g}}

\newcommand{\ev}{{\mathbb{E}}}

\newcommand{\E}[1]{\ev\left[#1\right]}

\newcommand{\bE}[1]{\ev\bigl[#1\bigr]}

\newcommand{\Ed}[2]{\ev_{#1}\left[#2\right]}
\newcommand{\bEd}[2]{\ev_{#1}\bigl[#2\bigr]}

\DeclareMathOperator{\var}{var}

\newcommand{\expop}[1]{\exp\left\{#1\right\}}
\newcommand{\reals}{\mathbb{R}}

\newcommand{\norm}[1]{\|#1\|}

\newcommand{\ip}[2]{\langle#1,#2\rangle}
\newcommand{\bip}[2]{\bigl\langle#1,#2\bigr\rangle}

\newcommand{\bxi}{\boldsymbol{\xi}}

\newcommand{\bXi}{\boldsymbol{\Xi}}

\newcommand{\bpsi}{\boldsymbol{\psi}}

\newcommand{\bphi}{\boldsymbol{\phi}}





\newcommand{\frob}[1]{\|#1\|_\mathrm{F}}
\newcommand{\bfrob}[1]{\bigl\|#1\bigr\|_\mathrm{F}}
\newcommand{\bbfrob}[1]{\left\|#1\right\|_\mathrm{F}}

\newcommand{\defeq}{\triangleq}
\newcommand{\simpXY}{\cP^{\X\times\Y}}

\newcommand{\simpX}{\cP^\X}
\newcommand{\simpY}{\cP^\Y}

\newcommand{\nbhd}{\cN}

\newcommand{\Ph}{\hat{P}}
\newcommand{\empgen}{\Ph} 
\newcommand{\refgen}{P_0} 

\newcommand{\fvgen}{\xi}


\newcommand{\ivgen}{\phi}

\newcommand{\ivemp}{\hat{\phi}}  
\newcommand{\ivspacegen}{\cI^\Z} 
\newcommand{\ivspaceX}{\cI^\X}   
\newcommand{\ivspaceY}{\cI^\Y}   
\newcommand{\llgen}{L}

\newcommand{\dtmh}{\mathbf{B}}
\newcommand{\dtm}{\tilde{\dtmh}}

\newcommand{\bP}{\mathbf{P}}

\newcommand{\bQ}{\mathbf{Q}}
\newcommand{\eqd}{\stackrel{\mathrm{d}}{=}}

\newcommand\T{\mathrm{T}}

\newcommand{\bLa}{\boldsymbol{\Lambda}}

\newcommand{\eff}{\nu} 
\newcommand{\Pt}{\tilde{P}}
\newcommand{\phit}{\tilde{\phi}}

\newcommand{\bone}{\mathbf{1}}
\newcommand{\bI}{\mathbf{I}}

\definecolor{electricpurple}{rgb}{0.75, 0.0, 1.0}



\newcommand{\rep}{s}
\newcommand{\weight}{v}
\newcommand{\bias}{b}
\newcommand{\rinf}{\xi^X}
\newcommand{\winf}{\xi^Y}
\newcommand{\rmat}{\bXi^X}
\newcommand{\wmat}{\bXi^Y}

\newcommand{\hdtm}{\dtm_1}

\newcommand{\repm}{\mu_s}        


\newcommand{\hrep}{\rep}
\newcommand{\hweight}{\weight}
\newcommand{\hbias}{\bias}
\newcommand{\hrmat}{\rmat}
\newcommand{\hwmat}{\wmat}

\newcommand{\bhrep}{t}
\newcommand{\bhweight}{w}
\newcommand{\bhbias}{c}
\newcommand{\bhrmat}{\hrmat_1}

\newcommand{\hrepm}{\repm}        
\newcommand{\hrepp}{\repp} 

\newcommand{\repp}{\tilde{\rep}} 
\newcommand{\weightp}{\tilde{\weight}}
\newcommand{\biasd}{d}
\newcommand{\biasdp}{\tilde{\biasd}}
\newcommand{\hweightp}{\tilde{\hweight}}

\newcommand{\bhrepp}{\tilde{\bhrep}}

\newcommand{\prior}{\Theta}

\newcommand{\lossb}{\eta}
\newcommand{\loss}{{\cal L}}
\newcommand{\lossbh}{\kappa}

\newcommand{\rdim}{k} 
\newcommand{\hrdim}{k} 
\newcommand{\bhrdim}{m} 
\newcommand{\ydim}{|\mathcal{Y}|} 


\newcommand{\bhrepm}{\mu_{{\bhrep}}}

\newcommand{\eqspace}{\quad\,\,}
\newcommand{\hrepsm}{\mu_{s^*}}

\newcommand{\markov}{\leftrightarrow}

\allowdisplaybreaks

\setlength{\belowdisplayskip}{1pt} 
\setlength{\belowdisplayshortskip}{1pt}
\setlength{\abovedisplayskip}{1pt} 
\setlength{\abovedisplayshortskip}{1pt}

\title{An Information Theoretic Interpretation to Deep Neural Networks}

\author{
  \IEEEauthorblockN{Shao-Lun Huang}
  \IEEEauthorblockA{DSIT Research Center\\
    Tsinghua-Berkeley Shenzhen Institute\\
    Shenzhen, China 518055\\
    Email: shaolun.huang@sz.tsinghua.edu.cn} 
  \and
\IEEEauthorblockN{Xiangxiang Xu}
  \IEEEauthorblockA{Dept. of Electronic Engineering\\
    Tsinghua University\\
    Beijing, China 100084\\
    Email: xuxx14@mails.tsinghua.edu.cn} 
  \and
  \IEEEauthorblockN{Lizhong Zheng, Gregory W. Wornell}
  \IEEEauthorblockA{Dep. of Electrical \& Computer Eng.\\
    Massachusetts Institute of Technology\\
    Cambridge, MA 02139-4307\\
    Email: \{lizhong, gww\}@mit.edu}
}

\begin{document}

\maketitle

\begin{abstract}
It is commonly believed that the hidden layers of deep neural networks (DNNs) attempt to extract informative features for learning tasks. In this paper, we formalize this intuition by showing that the features extracted by DNN coincide with the result of an optimization problem, which we call the ``universal feature selection'' problem, in a local analysis regime. We interpret the weights training in DNN as the projection of feature functions between feature spaces, specified by the network structure. Our formulation has direct operational meaning in terms of the performance for inference tasks, and gives interpretations to the internal computation results of DNNs. Results of numerical experiments are provided to support the analysis.
\end{abstract}

\IEEEpeerreviewmaketitle

\section{Introduction}

Due to the striking performance of deep learning in various fields, deep neural networks (DNNs) have gained great attentions in modern computer science. While it is a common understanding that the features extracted from the hidden layers of DNN are ``informative" for learning tasks, the mathematical meaning of informative features in DNN is generally not clear. There have been numerous research efforts towards this direction~\cite{MacKay03}. For instance, the information bottleneck~\cite{tishby2015deep} employs the mutual information as the metric to quantify the informativeness of features in DNN, and other information metrics, such as the Kullback-Leibler (K-L) divergence~\cite{huang2017information} and Weissenstein distance~\cite{arjovsky2017wasserstein} are also used in different problems. However, because of the complicated structure of DNNs, there is a disconnection between these information metrics and the performance objectives of the inference tasks that DNNs want to solve. Therefore, it is in general difficult to match the DNN learning with the optimization of a particular information metric.



In this paper, our first contribution is to propose a learning framework, called universal feature selection, which connects the information metric of features and the performance evaluation of inference problems. Specifically for a pair of data variables $X$ and $Y$, the goal of universal feature selection is to select features from $X$ to infer about a targeted attribute $V$ of $Y$, where $V$ is only assumed with a rotationally uniform prior over the attribute space of $Y$, but the precise statistical model between $V$ and $X$ is unknown. Thus, the selected features have to be good for solving multiple inference problems, and should be generally ``informative" about $Y$. We show that in a local analysis regime, the averaged performance of inferring $V$ by a selected feature of $X$ is measured via a linear projection of this feature, which leads to an information metric to features, and the optimal features can be computed from the singular value decomposition (SVD) of this linear projection.



More importantly, we show that in the local analysis regime, the optimal features selected in DNNs from log-loss optimization coincide with the solutions of universal feature selection. Therefore, the information metric developed in universal feature selection can be used to understand the operations in DNNs. As a result, we observe that the DNN weight updates in general can be interpreted as projecting features between the feature spaces of data and label for extracting the most correlated aspects between them, and the iterative projections can be viewed as computing the SVD of a linear projection between these feature spaces. Moreover, our results also give an explicit interpretation of the goal and the procedures of the BackProp/SGD operations in deep learning. Finally, the theoretic results are validated via numerical experiments.

 \textbf{Notations:}
Throughout this paper, we use $X$, $\X$, $P_X$, and $x$ to represent a discrete random variable, the range, the probability distribution, and the value of $X$. In addition, for any function $\rep(X) \in \mathbb{R}^k$ of $X$, we use $\repm$ to denote the mean of $\rep(X)$, and ``$\tilde{~}$'' to denote the mean removed version of a variable; e.g., $\repp(X) = \rep(X) - \repm$. Finally, we use $\| \cdot \|$ and $\| \cdot \|_{\text{F}}$ to denote the $\ell_2$-norm 
and the Frobenius norm, 
respectively.


\section{Preliminary and Definition} \label{sec:PD}



Given a pair of discrete random variables $X,Y$ with the joint distribution $P_{XY}(x,y)$, the $|\Y|\times|\X|$ matrix $\dtm$ is defined as
\begin{equation} \label{eq:dtm-def}
\dtm (y,x) \defeq \frac{P_{XY}(x,y) - P_X(x) P_Y(y)}{\sqrt{P_X(x) P_Y(y)}},
\end{equation}
where $\dtm (y,x)$ is the $(y,x)$th entry of $\dtm$. The matrix $\dtm$ is referred to as the canonical dependence matrix (CDM). The SVD of $\dtm$ has the following properties \cite{huang2017information}.

\begin{lemma}
  The SVD of $\dtm$ can be written as $\dtm = \sum_{i=1}^{K} \sigma_i\, \bpsi^Y_i \bigl(\bpsi^X_i\bigr)^\T$, where $K\defeq\min\{|\X|,|\Y|\}$, and $\sigma_i$ denotes the $i$\/th singular value with the ordering $1 \geq \sigma_1 \ge \dots \ge \sigma_{K} = 0$, and $\bpsi^Y_i$ and $\bpsi^X_i$ are the corresponding left and right singular vectors with $\psi^X_K(x) = \sqrt{P_X(x)}$ and $\psi^Y_K(y) = \sqrt{P_Y(y)}$.
  \label{lem:dtm:svd}
\end{lemma}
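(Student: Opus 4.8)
The plan is to verify the three nonstandard features of the claimed SVD separately: the spectral-norm bound $\sigma_1 \le 1$, the vanishing of the smallest singular value $\sigma_K = 0$, and the explicit identification of the last pair of singular vectors. The ordering $1 \ge \sigma_1 \ge \dots \ge \sigma_K$ and the existence of a compact SVD with exactly $K = \min\{|\X|,|\Y|\}$ terms are then just the standard SVD convention applied to the $|\Y|\times|\X|$ matrix $\dtm$, so the only real work is in pinning down the two endpoints of the singular-value list.

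First I would establish $\sigma_1 = \spectral{\dtm} \le 1$ through a change of variables that turns the bilinear form of $\dtm$ into a covariance. Starting from the variational characterization $\sigma_1 = \max_{\norm{\ba}=\norm{\bb}=1} \bb^\T \dtm\, \ba$ and substituting $f(x) = a(x)/\sqrt{P_X(x)}$ and $g(y) = b(y)/\sqrt{P_Y(y)}$, the unit-norm constraints become $\E{f^2(X)} = \E{g^2(Y)} = 1$, and the weighting $1/\sqrt{P_X(x)P_Y(y)}$ in $\dtm$ cancels so that
\begin{equation*}
\bb^\T \dtm\, \ba = \sum_{x,y}\bigl[P_{XY}(x,y) - P_X(x)P_Y(y)\bigr] f(x)g(y) = \cov\bigl(f(X),g(Y)\bigr).
\end{equation*}
Cauchy--Schwarz then gives $\abs{\cov(f(X),g(Y))} \le \sqrt{\var(f(X))\,\var(g(Y))} \le \sqrt{\E{f^2(X)}\,\E{g^2(Y)}} = 1$, so $\sigma_1 \le 1$. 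This is precisely the statement that the Hirschfeld--Gebelein--R\'enyi maximal correlation is at most one.

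Next I would exhibit the kernel directions. Taking $\bpsi^X_K$ and $\bpsi^Y_K$ to have entries $\psi^X_K(x) = \sqrt{P_X(x)}$ and $\psi^Y_K(y) = \sqrt{P_Y(y)}$ --- unit vectors precisely because $\sum_x P_X(x) = \sum_y P_Y(y) = 1$ --- I would compute directly that $\dtm\,\bpsi^X_K = 0$ and $\dtm^\T \bpsi^Y_K = 0$. Both reduce to marginalization: in the first, the factor $\sqrt{P_X(x)}$ cancels the denominator and the sum over $x$ of $[P_{XY}(x,y) - P_X(x)P_Y(y)]$ equals $P_Y(y) - P_Y(y) = 0$, and symmetrically for the second using $\sum_y P_{XY}(x,y) = P_X(x)$. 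Hence $\bpsi^X_K$ lies in the right null space and $\bpsi^Y_K$ in the left null space, so they form a valid singular pair with singular value $0$.

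Finally I would assemble these facts. The two kernel relations force $\operatorname{rank}(\dtm) \le |\X| - 1$ and $\operatorname{rank}(\dtm) \le |\Y| - 1$, hence $\operatorname{rank}(\dtm) \le K - 1$; since $\dtm$ has exactly $K$ singular values, the smallest must vanish, giving $\sigma_K = 0$. Choosing the null-space orthonormal bases so that their last elements are $\bpsi^X_K$ and $\bpsi^Y_K$ is always possible, and these are automatically orthogonal to the singular vectors with $\sigma_i > 0$ because those lie in the row and column spaces; this places $\sqrt{P_X}$ and $\sqrt{P_Y}$ as the $K$th singular vectors. I expect the main obstacle to be the spectral bound: the raw maximization $\max \bb^\T\dtm\,\ba$ is opaque, and the crux is recognizing that the $1/\sqrt{P_X P_Y}$ normalization is exactly what converts the constrained bilinear form into a covariance so that Cauchy--Schwarz applies cleanly. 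The kernel computation and the rank/ordering bookkeeping are then routine.
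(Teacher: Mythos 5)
Your proposal is correct and complete. The paper itself offers no proof of this lemma --- it is imported by citation from \cite{huang2017information} --- so there is nothing internal to compare against; your argument is the standard self-contained one and each step checks out: the change of variables $f = a/\sqrt{P_X}$, $g = b/\sqrt{P_Y}$ does convert the bilinear form into $\cov(f(X),g(Y))$ with the unit-norm constraints becoming second-moment constraints, Cauchy--Schwarz then gives $\sigma_1\le 1$, the marginalization identities put $\sqrt{P_X}$ and $\sqrt{P_Y}$ in the right and left kernels, and the rank bound forces $\sigma_K=0$, with the arbitrariness of pairing singular vectors at a zero singular value letting you place them as the $K$th pair.
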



This SVD decomposes the feature spaces of $X,Y$ into maximally correlated features. To see that, consider the generalized canonical correlation analysis (CCA) problem:
\begin{align*} 
\max_{\substack{\E{f_i(X)} = \E{g_i(Y)} = 0 \\ {\E{f_i(X)\,f_j(X)}  = \E{g_i(Y)\,g_j(Y)} =  \kron_{i=j}}}} \sum_{i=1}^k \E{f_i(X)\,g_i(Y)}.
\end{align*}
It can be shown that for any $1 \leq k \leq K-1$, the optimal features are $f_i(x) = \psi^X_i(x) / \sqrt{P_X(x)}$, and $ g_i(y) = \psi^Y_i(y) /\sqrt{P_Y(y)}$, for $i=0,\dots,K-1$, where $\psi^X_i(x)$ and $\psi^Y_i(y)$ are the $x$\/th and $y$\/th entries of $\bpsi^X_i$ and $\bpsi^Y_i$, respectively~\cite{huang2017information}. The special case $k=1$ corresponds to the HGR maximal correlation~\cite{hoh35,hg41,ar59}, and the optimal features can be computed from the ACE algorithm~\cite{Breiman85}.

Moreover, in this paper we focus on a particular analysis regime described as follows.

\begin{definition}[$\eps$-Neighborhood]
Let $\simpX$ denote the space of distributions
on some finite alphabet $\X$, and let $\relint(\simpX)$ denote the
subset of strictly positive distributions. 
For a given $\eps>0$, the $\eps$-neighborhood of a
distribution $P_X \in \relint(\simpX)$ is defined by the $\chi^2$-divergence as
\begin{equation*}
\nbhd_\eps^\X(P_X) 
\defeq \left\{ P \in \simpX \colon
\sum_{x\in\X} \frac{\bigl(P(x)- P_X(x)\bigr)^2}{P_X(x)} \le \eps^2 \right\}.
\end{equation*}
\end{definition}
\begin{definition}[$\eps$-Dependence]
The random variables $X,Y$ is called $\eps$-dependent if $P_{XY} \in \nbhd_\eps^{\X \times \Y}(P_XP_Y)$.
\end{definition}
\begin{definition}[$\eps$-Attribute]
A random variable $U$ is called an $\eps$-attribute of $X$ if $P_{X|U} (\cdot | u) \in \nbhd_\eps^{\X }(P_X)$, for all $u \in \cU$.
\end{definition}
Throughout this paper, we focus on the small $\eps$ regime, which we refer to as the local analysis regime. 
In addition, for any $P \in \simpX$, we define the \emph{information vector} $\phi$ and \emph{feature function} $\llgen(x)$ corresponding to $P$, with respect to a reference distribution $P_X \in \relint(\simpX)$, as
\begin{equation}
\phi(x) \defeq  \frac{P(x) -
  P_X(x)}{\sqrt{P_X(x)}}, \quad \llgen(x) \defeq \frac{\phi(x)}{\sqrt{P_X(x)}}.
\label{eq:iv-def}
\end{equation}
This gives a three way correspondence $P \leftrightarrow \phi \leftrightarrow L$ for all distributions in $\nbhd_\eps^\X(P_X)$, which will be useful in our derivations.

\section{Universal Feature Selection}
\label{sec:ufs}
Suppose that given random variables $X,Y$ with joint distribution $P_{XY}$, we want to infer about an attribute $V$ of $Y$ from observed i.i.d. samples $x_1, \dots , x_n$ of $X$. When the statistical model $P_{X|V}$ is known, the optimal decision rule is the log-likelihood ratio test, where the log-likelihood function can be viewed as the optimal feature for inference. However, in many practical situations~\cite{huang2017information}, it is hard to identify the model of the targeted attribute, and is necessary to select low-dimensional informative features of $X$ for inference tasks before knowing the model. 
We call this universal feature selection problem. 
To formalize this problem, for an attribute $V$, we refer to $\C_\Y = \bigl\{\, \V,\ \{P_V(v),\ v\in\V\},\ \{\bphi^{Y|V}_v,\ v\in\V\} \bigr\},$
as the \emph{configuration} of $V$, where $\phi^{Y|V}_v \leftrightarrow P_{Y|V}(\cdot|v)$ is the information vector specifying the corresponding conditional distribution $P_{Y|V}(\cdot|v)$. The configuration of $V$ models the statistical correlation between $V$ and $Y$. In the sequel, we focus on the local analysis regime, for which we assume that all the attributes $V$ of our interests to detect are $\eps$-attributes of $Y$. As a result, the corresponding configuration satisfies $\| \phi^{Y|V}_v \| \leq \eps$, for all $v \in \V$. We refer to this as the \emph{$\eps$-configurations}. The configuration of $V$ is unknown in advance, but assumed to be generated from a \emph{rotational invariant ensemble (RIE)}.
\begin{definition}[RIE]
Two configurations $\C_\Y $ and $\tilde{\C}_\Y$ defined as
$$\C_\Y = \bigl\{\, \V,\ \{P_V(v),\ v\in\V\},\ \{\bphi^{Y|V}_v,\ v\in\V\} \bigr\},$$
$$\tilde{\C}_\Y \defeq \bigl\{\, \V,\ \{P_V(v),\ v\in\V\},\ \{\tilde{\bphi}\vphantom{\bphi}^{Y|V}_v,\ v\in\V\} \bigr\}$$ are called rotationally equivalent, if there exists a unitary matrix $\bQ$ such that $\tilde{\bphi}\vphantom{\bphi}^{Y|V}_v = \bQ \, \bphi^{Y|V}_v$, for all $v \in \V$. Moreover, a probability measure defined on a set of configurations is called an RIE, if all rotationally equivalent configurations have the same measure.
\end{definition}
The RIE can be interpreted as assigning a uniform measure to the attributes with the same level of distinguishability. To infer about the attribute $V$, we construct a $k$-dimensional feature vector $h^k = (h_1,\dots,h_k)$, for some $1\le k\le K-1$, of the form
$h_i = \frac1n \sum_{l=1}^n f_i(x_l), \ i =  1, \ldots , k$,
for some choices of feature functions $f_i$. Our goal is to determine the $f_i$ such that the optimal decision rule based on $h^k$ achieves the smallest possible error probability, where the performance is averaged over the possible $\C_\Y$ generated from an RIE. In turn, we denote $\xi^X_i \leftrightarrow f_i$ as the corresponding information vector, and define the matrix $\bXi^X \defeq [  \bxi^X_1 \ \cdots \  \bxi^X_k ]$.




\begin{theorem}[Universal Feature Selection] \label{thm:UFS}
For $v,v' \in \V$, let $E_{h^k}(v,v')$ be the error exponent associated with the pairwise error probability distinguishing $v$ and $v'$ based on $h^k$, then the expectation of the error exponent over a given RIE defined on the set of $\eps$-configuration is given by
\begin{align} \notag
&\E{ E_{h^k}(v,v') } \\ \label{eq:EEg-k}
&= \frac{\E{\bigl\| \bphi^{Y|V}_v - \bphi^{Y|V}_{v'} \bigr\|^2}}{8|\Y|} \bbfrob{\dtm \bXi^X \bigl(\bigl(\rmat\bigr)^\T  \rmat\bigr)^{-\frac{1}{2}}}^2 + o(\eps^2),
\end{align}
where the expectations are taken over this RIE. 
\end{theorem}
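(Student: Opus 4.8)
The plan is to reduce $E_{h^k}(v,v')$ to a Gaussian hypothesis-testing exponent for the statistic $h^k$, and then average over the RIE. First I would transport the problem from the $Y$-space to the $X$-space through the CDM. Since $V$ is an attribute of $Y$, we have the Markov relation $V \markov Y \markov X$, so $P_{X|V}(\cdot\mid v) = \sum_y P_{X|Y}(\cdot\mid y)\,P_{Y|V}(y\mid v)$. Substituting the definition \eqref{eq:dtm-def} of $\dtm$ into the information vector \eqref{eq:iv-def} of $P_{X|V}(\cdot\mid v)$, and using $\sum_y \sqrt{P_Y(y)}\,\phi^{Y|V}_v(y)=0$ to annihilate the rank-one part $\sqrt{P_X(x)P_Y(y)}$ of $P_{XY}(x,y)/\sqrt{P_X(x)P_Y(y)}$, a direct computation gives the clean relation $\bphi^{X|V}_v = \dtm^\T\bphi^{Y|V}_v$. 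This is the crucial link that lets the $X$-side feature geometry see the $Y$-side configuration only through $\dtm$.

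Next I would characterize the law of $h^k$. Under $V=v$ the samples are i.i.d.\ from $P_{X|V}(\cdot\mid v)$, and since each $\bxi^X_i \markov f_i$ is mean-zero under $P_X$, the mean of $h^k$ is $\E{h^k\mid V=v} = (\rmat)^\T\bphi^{X|V}_v = (\dtm\,\rmat)^\T\bphi^{Y|V}_v$, whose $i$th coordinate is $\ip{\bxi^X_i}{\bphi^{X|V}_v}$. In the local regime $P_{X|V}(\cdot\mid v)$ is $\eps$-close to $P_X$, so the conditional covariance of $h^k$ is $\frac1n (\rmat)^\T\rmat$ to leading order, uniformly in $v$ (entries $\ip{\bxi^X_i}{\bxi^X_j}$), with relative corrections of order $\eps$. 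Because the CLT makes $h^k$ asymptotically Gaussian under each hypothesis with a common leading-order covariance, the pairwise error exponent is the Chernoff information between two equal-covariance Gaussians, i.e.\ $\frac18$ of the squared Mahalanobis distance of the means: writing $\bDel \defeq \bphi^{Y|V}_v - \bphi^{Y|V}_{v'}$ and $\bM\defeq\dtm\,\rmat$,
\begin{equation*}
E_{h^k}(v,v') = \frac18\,\bDel^\T \bM\bigl((\rmat)^\T\rmat\bigr)^{-1}\bM^\T\bDel + o(\eps^2).
\end{equation*}

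Finally I would average over the RIE. Rotational invariance forces $\E{\bDel\bDel^\T}$ to be isotropic, $\E{\bDel\bDel^\T} = \frac{\E{\norm{\bDel}^2}}{|\Y|}\,\bI$, up to the direction $\bpsi^Y_K=\sqrt{P_Y}$; but by \lemref{lem:dtm:svd} one has $\sigma_K=0$, so every column of $\dtm$ is orthogonal to $\bpsi^Y_K$, that direction is annihilated by $\bM^\T$, and does not contribute. Taking the expectation and using the trace identity $\tr\bigl(\bM((\rmat)^\T\rmat)^{-1}\bM^\T\bigr) = \bbfrob{\dtm\,\rmat\bigl((\rmat)^\T\rmat\bigr)^{-1/2}}^2$ then yields
\begin{equation*}
\E{E_{h^k}(v,v')} = \frac{\E{\norm{\bDel}^2}}{8|\Y|}\,\bbfrob{\dtm\,\rmat\bigl((\rmat)^\T\rmat\bigr)^{-1/2}}^2 + o(\eps^2),
\end{equation*}
which is \eqref{eq:EEg-k}.

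The main obstacle is the exponent step: justifying the Gaussian/Chernoff reduction with errors that are provably $o(\eps^2)$. Concretely I must verify that (i) replacing the $v$-dependent single-sample covariance by the common matrix $(\rmat)^\T\rmat$ perturbs the exponent only at order $\eps^3$ --- which works because the mean gap $\bM^\T\bDel$ is itself $O(\eps)$, so the quadratic form is $O(\eps^2)$ and relative $O(\eps)$ covariance errors are negligible --- and (ii) that the optimal Chernoff parameter is $s=\frac12$, producing the factor $\frac18$ rather than a generic $s(1-s)/2$. A secondary subtlety is pinning down the isotropy constant: strictly the information vectors lie in the $(|\Y|-1)$-dimensional hyperplane orthogonal to $\bpsi^Y_K$, so one must argue that the stated normalization $|\Y|$ (rather than $|\Y|-1$) is the intended one, a distinction that is immaterial to leading order.
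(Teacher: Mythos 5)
Your proposal is correct in outline and shares the paper's skeleton --- the Markov transport identity $\bphi^{X|V}_v = \dtm^\T\bphi^{Y|V}_v$ (the paper's Lemma~\ref{lem:markov:iv}, proved exactly as you sketch), followed by a local exponent computation and an isotropic average over the RIE (the paper's Lemma~\ref{lem:rie}) --- but you take a genuinely different route through the technical heart of the argument, namely the pairwise exponent. The paper proves this via Cram\'er's theorem: it characterizes the optimal tilted distribution as the intersection of the exponential family through $P_j$ with natural statistic $f^k$ and the linear family fixed by the decision rule, computes the I-projection parameters $\theta_i^*$ explicitly, and then applies the local quadratic approximation of K-L divergence; it also first assumes orthonormal features and recovers the general case by invariance of the exponent under invertible linear maps of the statistic, which is where the $\bigl(\bigl(\rmat\bigr)^\T\rmat\bigr)^{-1/2}$ whitening enters. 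You instead read the exponent off as the Chernoff information of two equal-covariance Gaussians, getting the Mahalanobis form $\frac18\bDel^\T\bM\bigl(\bigl(\rmat\bigr)^\T\rmat\bigr)^{-1}\bM^\T\bDel$ directly without the orthonormalization detour, which is cleaner. The one place you should tighten the argument is the phrase ``the CLT makes $h^k$ asymptotically Gaussian'': the error exponent is a large-deviations quantity and the CLT alone does not determine it; the correct justification is that the Cram\'er rate function of $h^k$, evaluated at thresholds within $O(\eps)$ of the means, equals its second-order Taylor expansion up to $O(\eps^3)$ --- which is precisely what the paper's exponential-family computation establishes. You flag this as the main obstacle and your items (i) and (ii) are the right checks, so this is a presentational gap rather than a wrong step; your observation that the $\sqrt{P_Y}$ direction is annihilated because $\sigma_K=0$, and that the $|\Y|$ versus $|\Y|-1$ normalization is a convention of the RIE, both match the paper's treatment.
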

\begin{proof}
See Appendix~\ref{app:1}.
\end{proof}

As a result of~\eqref{eq:EEg-k}, designing the $\bxi^X_i$ as the singular vectors $\bpsi^X_i$ of $ \dtm$, for $i=1,\dots,k$, optimizes~\eqref{eq:EEg-k} for all RIEs, pairs of $(v,v')$, and $\eps$-configurations. Thus, the feature functions corresponding to $\bpsi^X_i$ are \emph{universally optimal} for inferring the unknown attribute $V$. Moreover,~\eqref{eq:EEg-k} naturally leads to an information metric $\frob{\dtm \bXi^X \bigl(\bigl(\rmat\bigr)^\T  \rmat\bigr)^{-\frac{1}{2}}}^2$ for any feature $\bXi^X$ of $X$, measured by projecting the normalized $\bXi^X$ through a linear projection $\dtm$. This information metric quantifies how informative a feature of $X$ is when solving inference problems with respect to $Y$, and is optimized when designing features by singular vectors of $\dtm$. Thus, we can interpret the universal feature selection as solving the most informative features for data inferences via the SVD of $\dtm$, which also coincides with the maximally correlated features in \secref{sec:PD}. Later on we will show that the feature selections in DNN share the same information metric as universal feature selection in the local analysis regime.

\section{Interpreting Softmax Regression} 
\label{sec:softmax}

To begin, recall that for a data vector $X$ and label $Y$ with labeled samples $(x_i, y_i)$, for $i = 1, \ldots , n$, the softmax regression generally uses a discriminative model of the form 
\begin{align} \label{eq:softmax}
\Pt_{Y|X}^{(\weight,\bias)} (y|x)   \defeq \frac{e^{\weight^\T(y)\rep(x) + \bias(y)}}{\sum_{y' \in \Y}e^{\weight^\T(y')\rep(x) + \bias(y')}  }
\end{align}
to address the classification problems, where $s(x) \in \mathbb{R}^k$ is a $k$-dimensional representation of $X$ used to predict the label, and $\weight(y) \in \mathbb{R}^k$ and $\bias(y) \in \mathbb{R}$ are the parameters required to be learned from 
\begin{equation} 
(\weight,\bias)^* = \argmax_{(\weight,\bias)} \frac{1}{N} \sum_{i=1}^N \log
\Pt_{Y|X}^{(\weight,\bias)}(y_i|x_i).
\label{eqn:logistic_opt}
\end{equation}
As depicted in \figref{fig:NN}, the ordinary softmax regression corresponds to $\rep(x) = x$. More generally, $\rep(x)$ can be the output of the previous hidden layer of a neural network, i.e., the selected feature of $x$ fed into the softmax regression.  In the rest of this section, we will show that when $X,Y$ are $\eps$-dependent, the functions $\rep(x)$ and $\weight(y)$ coincide with the solutions of the universal feature selection.


\begin{figure}[t]
  \centering
   \resizebox{\columnwidth} {!}{\def\layersep{2.5cm}


\begin{tikzpicture}[shorten >=1pt,->,draw=white!60!black!20!blue, node distance=\layersep]
  \tikzstyle{every pin edge}=[<-,shorten <=1pt]
  \tikzstyle{neuron}=[line width = .8, draw = black, circle, minimum size=18pt,inner sep=0pt];
  \tikzstyle{cdot}=[draw=none, fill=none, execute at begin node=\color{black}$\cdots$];
  \tikzstyle{input neuron}=[neuron]; 
  \tikzstyle{hidden neuron}=[neuron]; 
  \tikzstyle{output neuron}=[neuron]; 
  \tikzstyle{annot} = [text width=4em, text centered];
  \node[annot, text width = 4em] at (1.3*\layersep,-2.5) {\small{{\sf{Input Features}}}};


  \node[rectangle, draw = black] (A) at (4.5*\layersep,-2.5) {$\displaystyle \Pt_{Y|X}^{(\weight,\bias)} (y|x) \defeq \frac{e^{\weight^{\T}(y)\rep(x) + \bias(y)}}{\sum_{y' \in \Y}e^{\weight^{\T}(y')\rep(x) + \bias(y')}}$};  
  \node[above] at (A.north) {\small{\sf{Softmax Output}}};

\foreach \m [count=\y] in {1, missing, 2}
  \node [hidden neuron/.try, neuron \m/.try, yshift=-1.1cm] (hidden-\m) at (2*\layersep, - 0.7 * \y cm) {};

\foreach \m [count=\y] in {$\hrep_1$, , $\hrep_{\hrdim}$}
  \node [yshift=-1.1cm] at (2*\layersep, -0.7 * \y cm) {\small\m}; 
  
  \node [bias neuron] (hidden-bias) at (2*\layersep, -4 cm) {\scriptsize{+1}};
  
  \foreach \text / \m [count=\y] in {$Y = 1$/1, $Y = 2$/2, /missing, $Y = {\ydim}$/3}
  \node [rect neuron/.try, neuron \m/.try] (output-\m) at (3*\layersep,-\y) {\text}; 
    





\foreach \i in {1,...,2}
  \foreach \j in {1,...,3}
    \draw [->, line width = .6] (hidden-\i) -- (output-\j.west);

\foreach \text / \ys  [count=\i] in {$\hweight(1)$ / .15cm, $\hweight(2)$ / .05 cm, $\hweight(\ydim)$ / -0.18 cm}
   \draw [draw = none] (hidden-missing) -- (output-\i) node[pos=0.7, yshift = \ys]{\scriptsize\text};

\foreach \text [count=\i] in {$\bias(1)$, $\bias(2)$, $\bias(\ydim)$}
\draw [->, line width = .6, opacity = 0.4] (hidden-bias) -- (output-\i.west);

\foreach \text / \xs / \ys [count=\i] in {$\bias(1)$ / .1 / .1cm, $\bias(2)$ / .1 / -.2cm, $\bias(\ydim)$ /.1 / -.11 cm}
   \draw [->, line width = .6, draw = none, opacity = 0.6] (hidden-bias) -- (output-\i) node[pos=0.25, xshift = \xs, yshift = \ys]{\tiny\text};

\foreach \l [count=\i] in {1,2}
  \draw [<-, line width = .6] (hidden-\i) -- ++(-1.2,0)
    node [left] {};

   

\end{tikzpicture}


  \caption{A simple neural network with one layer of hidden nodes with softmax output.}
  \label{fig:NN}
\end{figure}
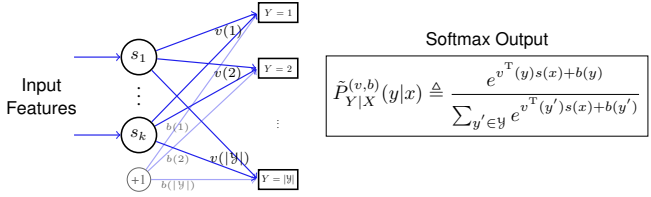



First, we use $P_{XY}$ to denote the joint empirical distribution of the labeled samples $(x_i, y_i), i=1, \dots, N$, and $P_X, P_Y$ to denote the corresponding marginal distributions. Then, the objective function in the optimization problem
\eqref{eqn:logistic_opt} is precisely the empirical average of the log-likelihood, i.e., $\frac{1}{N} \sum_{i=1}^N \log \Pt_{Y|X}^{(\weight,\bias)}(y_i | x_i) = \Ed{P_{XY}}{\log \Pt_{Y|X}^{(\weight,\bias)}(Y|X)}.$
Therefore, maximizing this empirical average is equivalent as minimizing the K-L
divergence:  
\begin{align} 
(\weight,\bias)^* 
= \argmin_{(\weight,\bias)} \quad D( P_{XY} \| P_X \, \Pt^{(\weight,\bias)}_{Y|X}).
\label{eq:KL_softmax} 
\end{align}
This can be interpreted as finding the best fitting to empirical joint distribution $P_{XY}$ by distributions of the form $P_X \, \Pt^{(\weight,\bias)}_{Y|X}$. In our development, it is more convenient to denote the bias by $d(y) = \bias(y) - \log P_Y(y)$, for $y \in \cY$.
Then, the following lemma illustrates the explicit constraint on the problem~\eqref{eq:KL_softmax} in the local analysis regime.
\begin{lemma}\label{lem:sv:local}
If $X,Y$ are $\eps$-dependent, then the optimal $\weight, d$ for~\eqref{eq:KL_softmax} satisfy 
\begin{align}\label{eq:softmax_local}
|\weightp^\T(y)\rep(x) + \biasdp(y)| = O( \epsilon ), \quad \text{for all $x \in \cX, \ y \in \cY$}.
\end{align}
\end{lemma}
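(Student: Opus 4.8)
The plan is to exploit the fact that $\eps$-dependence forces the empirical posterior $P_{Y|X}(\cdot|x)$ to lie within $O(\eps)$ of the marginal $P_Y$, and to argue that the best-fitting softmax posterior inherits this property; the logits in \eqref{eq:softmax_local} are exactly the gauge-invariant quantities of the model, so controlling the fitted posterior controls them. First I would reparametrize. Absorbing $\log P_Y$ into the bias via $\biasd(y) = \bias(y) - \log P_Y(y)$ writes the model as $\Pt_{Y|X}^{(\weight,\bias)}(y|x) = P_Y(y)\,e^{g(x,y)}/Z(x)$, where $g(x,y) \defeq \weight^\T(y)\rep(x) + \biasd(y)$ and $Z(x) \defeq \sum_{y'} P_Y(y')\,e^{g(x,y')}$. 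Using $P_{XY} = P_X P_{Y|X}$, the objective in \eqref{eq:KL_softmax} decomposes into a per-sample average of conditional divergences,
\begin{equation*}
D\bigl(P_{XY} \,\|\, P_X\,\Pt_{Y|X}^{(\weight,\bias)}\bigr) = \Ed{P_X}{D\bigl(P_{Y|X}(\cdot|X) \,\|\, \Pt_{Y|X}^{(\weight,\bias)}(\cdot|X)\bigr)}.
\end{equation*}
Evaluating at the trivial point $\weight \equiv 0,\ \biasd \equiv 0$, for which $\Pt_{Y|X} = P_Y$, gives the value $D(P_{XY}\|P_XP_Y)$, which is bounded by the $\chi^2$-divergence and hence is $O(\eps^2)$ by $\eps$-dependence; therefore the optimal value is $O(\eps^2)$ as well.

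Next, since each conditional divergence is nonnegative and $P_X$ is strictly positive (bounded below), the optimality bound forces $D\bigl(P_{Y|X}(\cdot|x)\,\|\,\Pt_{Y|X}(\cdot|x)\bigr) = O(\eps^2)$ for every $x$. I would then pass to the local geometry of \secref{sec:PD}: a divergence of order $\eps^2$ corresponds to an information-vector (equivalently $\chi^2$) distance of order $\eps$, so $\Pt_{Y|X}(\cdot|x)$ sits within $O(\eps)$ of $P_{Y|X}(\cdot|x)$, while $\eps$-dependence places $P_{Y|X}(\cdot|x)$ within $O(\eps)$ of $P_Y$. By the triangle inequality in information-vector space the optimal conditional satisfies $\Pt_{Y|X}(\cdot|x) \in \nbhd^\Y_{O(\eps)}(P_Y)$. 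Invoking the three-way correspondence $P \leftrightarrow \phi \leftrightarrow \llgen$ together with the local log-likelihood relation then yields the pointwise estimate $\log\bigl(\Pt_{Y|X}(y|x)/P_Y(y)\bigr) = O(\eps)$, uniformly in $(x,y)$. In the reparametrized model this log-ratio equals $g(x,y) - \log Z(x)$, so $g(x,y) - \log Z(x) = O(\eps)$ for all $(x,y)$.

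Finally I would eliminate the normalizer by averaging over $y$. Since $g(x,y) = \log Z(x) + O(\eps)$ holds uniformly in $y$, taking the $P_Y$-expectation gives $\bar g(x) \defeq \Ed{P_Y}{g(x,Y)} = \log Z(x) + O(\eps)$, and subtracting the two yields $g(x,y) - \bar g(x) = O(\eps)$. A direct computation identifies this mean-removed logit with the claimed quantity: because $\bar g(x) = \mu_\weight^\T\rep(x) + \mu_{\biasd}$, we get $g(x,y) - \bar g(x) = (\weight(y)-\mu_\weight)^\T \rep(x) + (\biasd(y)-\mu_{\biasd}) = \weightp^\T(y)\rep(x) + \biasdp(y)$, which is precisely \eqref{eq:softmax_local}.

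I expect the main obstacle to be the middle step: converting the aggregate $O(\eps^2)$ divergence bound into a genuinely \emph{pointwise} $O(\eps)$ bound on the logits, modulo the softmax gauge freedom $\weight(y)\mapsto\weight(y)+c$, $\biasd(y)\mapsto\biasd(y)+e$. This is where the strict positivity of $P_X$ (to localize the average divergence in each $x$), the triangle inequality in the $\chi^2$ geometry (to transfer closeness from $P_{Y|X}$ to $\Pt_{Y|X}$), and the local log-likelihood correspondence (to turn a norm bound on the information vector into a uniform bound on the log-ratio) all have to be combined carefully, rather than through any single routine estimate.
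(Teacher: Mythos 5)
Your argument is correct, but it takes a genuinely different route from the paper's. The paper never conditions on $x$: it first proves a Pinsker-based transfer lemma showing that the smallness of $D(P_{XY}\|P_XP_Y)$ and of the optimal $D(P_{XY}\|P_X\Pt^{(\weight,\bias)}_{Y|X})$ together force $D(P_XP_Y\|P_X\Pt^{(\weight,\bias)}_{Y|X})=O(\eps^2)$, and it deliberately works with \emph{this} divergence because the zero-mean normalization of $\weightp,\biasdp$ under $P_Y$ kills the linear term, reducing it to the pure log-partition expression $\sum_x P_X(x)\log\sum_{y'}P_Y(y')e^{\tau(x,y')}$ with $\tau(x,y)=\weightp^\T(y)\rep(x)+\biasdp(y)$. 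It then isolates a single pair $(x,y)$ by a Jensen step that lumps all $y'\neq y$ into one exponential, yielding the pointwise lower bound $P_X(x)\log\bigl[P_Y(y)e^{\tau}+(1-P_Y(y))e^{-\frac{P_Y(y)}{1-P_Y(y)}\tau}\bigr]$, which is quadratic in $\tau$ near zero and monotone for large $|\tau|$; this forces $|\tau(x,y)|=O(\eps)$ without ever touching the normalizer $Z(x)$. You instead decompose the divergence conditionally, localize in $x$ via $P_X(x)\ge p_{\min}>0$, run Pinsker plus the triangle inequality through $P_{Y|X}(\cdot|x)$ to place $\Pt_{Y|X}(\cdot|x)$ within $O(\eps)$ of $P_Y$ entrywise, and then eliminate $\log Z(x)$ by averaging the logit identity over $y\sim P_Y$. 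Both proofs rest on the same two ingredients---strict positivity of $P_{XY}$ and a Pinsker-type conversion from an $O(\eps^2)$ divergence to $O(\eps)$ pointwise closeness (you are right that this is the step needing care, and your sketch supplies the right pieces)---but your conditional decomposition and averaging-over-$y$ trick replace the paper's Jensen bound on the partition function. Your version is somewhat more elementary and makes the gauge-fixing role of the mean removal explicit; the paper's version produces the uniform pointwise bound in one shot and hands over an explicit constant of the form $\sqrt{2C/(P_X(x)P_Y(y))}$.
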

\begin{proof}
See Appendix~\ref{app:2}.
\end{proof}

In turn, we take~\eqref{eq:softmax_local} as the constraint for solving the problem~\eqref{eq:KL_softmax} in the local analysis regime. Moreover, we define the information vectors for zero-mean vectors $\repp$, $\weightp$ as $\rinf(x) = \sqrt{P_X(x)}\, \repp(x)$, $ \winf(y) = \sqrt{P_Y(y)}\, \weightp(y)$, and define matrices
\begin{align*}
\wmat \defeq \begin{bmatrix} \winf(1) & \cdots & \winf(|\Y|)
\end{bmatrix}^\T,\\
\rmat \defeq \begin{bmatrix} \rinf(1) & \cdots & \rinf(|\X|)
\end{bmatrix}^\T.
\end{align*}
\begin{lemma} \label{lem:local:kl}
The K-L divergence~\eqref{eq:KL_softmax} in the local analysis regime~\eqref{eq:softmax_local} can be expressed as 
\begin{align} \notag
&D( P_{XY} \| P_X \, \Pt^{(\weight,\bias)}_{Y|X}) \\ \label{eq:kl:local}
&= \frac{1}{2} \bfrob{ \dtm - \wmat \bigl(\rmat\bigr)^\T }^2 + \frac{1}{2} \lossb^{ (\weight,\bias) }(s) + o(\eps^2),
\end{align}
where $\lossb^{ (\weight,\bias) }(\rep)  \defeq \Ed{P_Y}{(\repm^\T \weightp(Y) + \biasdp(Y))^2}$.
\end{lemma}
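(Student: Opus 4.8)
The plan is to reduce the joint divergence to an \emph{expected conditional} divergence and then apply the local quadratic approximation term by term. First I would use $P_{XY}(x,y)=P_X(x)\,P_{Y|X}(y|x)$ to write
\begin{equation*}
D\bigl(P_{XY}\,\|\,P_X\,\Pt_{Y|X}^{(\weight,\bias)}\bigr)=\Ed{P_X}{D\bigl(P_{Y|X}(\cdot|X)\,\|\,\Pt_{Y|X}^{(\weight,\bias)}(\cdot|X)\bigr)}.
\end{equation*}
Since $X,Y$ are $\eps$-dependent, each true conditional lies in $\nbhd_\eps^\Y(P_Y)$, and by the constraint \eqref{eq:softmax_local} so does each model conditional; hence I invoke the standard local (second-order) expansion $D(P_1\|P_2)=\tfrac12\norm{\bphi_1-\bphi_2}^2+o(\eps^2)$ with information vectors taken relative to the common reference $P_Y$. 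Each summand becomes $\tfrac12\bnorm{\bphi^{Y|X}_x-\bphit^{Y|X}_x}^2$, where $\bphi^{Y|X}_x\leftrightarrow P_{Y|X}(\cdot|x)$ and $\bphit^{Y|X}_x\leftrightarrow\Pt_{Y|X}^{(\weight,\bias)}(\cdot|x)$.

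Next I would evaluate the two information vectors. For the true conditional, substituting \eqref{eq:dtm-def} yields the clean identity $\phi^{Y|X}_x(y)=\dtm(y,x)/\sqrt{P_X(x)}$, i.e.\ the $x$-th column of $\dtm$ equals $\sqrt{P_X(x)}\,\bphi^{Y|X}_x$. For the model conditional, I would set $d(y)=\bias(y)-\log P_Y(y)$ and discard the $y$-independent part of the exponent, rewriting the softmax as
\begin{equation*}
\Pt_{Y|X}^{(\weight,\bias)}(y|x)=\frac{P_Y(y)\,e^{b(x,y)}}{\sum_{y'}P_Y(y')\,e^{b(x,y')}},\qquad b(x,y)\defeq\weightp^\T(y)\rep(x)+\biasdp(y).
\end{equation*}
By \eqref{eq:softmax_local}, $b(x,y)=O(\eps)$, so a first-order Taylor expansion applies; crucially, because $\weightp$ and $\biasdp$ are $P_Y$-mean-zero, the normalizer's linear term $\Ed{P_Y}{b(x,Y)}$ vanishes, leaving $\tilde\phi^{Y|X}_x(y)=\sqrt{P_Y(y)}\,b(x,y)+O(\eps^2)$. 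First-order accuracy here suffices, since the $O(\eps^2)$ error enters $\bnorm{\bphi^{Y|X}_x-\bphit^{Y|X}_x}^2$ only at order $o(\eps^2)$.

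Then I would assemble the squared norm. Writing $\sqrt{P_X(x)}\bigl(\phi^{Y|X}_x(y)-\tilde\phi^{Y|X}_x(y)\bigr)=M(y,x)+o(\eps^2)$ with
\begin{equation*}
M(y,x)=\dtm(y,x)-\winf(y)^\T\rinf(x)-\sqrt{P_X(x)}\,c(y),\qquad c(y)\defeq\sqrt{P_Y(y)}\bigl(\repm^\T\weightp(y)+\biasdp(y)\bigr),
\end{equation*}
one gets $\Ed{P_X}{\bnorm{\bphi^{Y|X}_x-\bphit^{Y|X}_x}^2}=\frob{M}^2+o(\eps^2)$, where $\winf(y)^\T\rinf(x)$ is the $(y,x)$ entry of $\wmat\rmat^\T$. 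Expanding $\frob{M}^2$, the diagonal pieces give exactly $\frob{\dtm-\wmat\rmat^\T}^2$ and $\sum_y c(y)^2=\Ed{P_Y}{(\repm^\T\weightp(Y)+\biasdp(Y))^2}=\lossb^{(\weight,\bias)}(s)$. The crux is that the two cross terms vanish: the $\dtm$ cross term is $\sum_y c(y)\sum_x\dtm(y,x)\sqrt{P_X(x)}$, which is $0$ because $\sqrt{P_X}$ is the right singular vector of $\dtm$ of singular value $0$ (\lemref{lem:dtm:svd}); the $\wmat\rmat^\T$ cross term carries the factor $\sum_x\rinf(x)\sqrt{P_X(x)}=\Ed{P_X}{\repp(X)}=\bzero$, which vanishes because $\repp$ is mean-removed. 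Summing the three contributions and multiplying by $\tfrac12$ gives \eqref{eq:kl:local}.

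The main obstacle I anticipate is the bookkeeping in the middle step: obtaining $\bphit^{Y|X}_x$ to the correct order while confirming that the mean-removal of $(\weightp,\biasdp)$ kills the normalizer's linear term (so no spurious $O(\eps)$ contribution survives), and then confirming that these same mean-zero conditions make the bias correction $\sqrt{P_X(x)}\,c(y)$ Frobenius-orthogonal to both $\dtm$ and $\wmat\rmat^\T$. These two orthogonality cancellations are precisely what split the objective into an SVD-fitting term $\frob{\dtm-\wmat\rmat^\T}^2$ and an additive bias penalty $\lossb^{(\weight,\bias)}(s)$, so they warrant the most care.
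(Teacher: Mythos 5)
Your proposal is correct and follows essentially the same route as the paper's proof in Appendix~C: a first-order Taylor expansion of the softmax (using that the mean-removed $\weightp,\biasdp$ annihilate the normalizer's linear term), the second-order local approximation of the K-L divergence, and the same two orthogonality cancellations that separate $\bfrob{\dtm - \wmat\bigl(\rmat\bigr)^\T}^2$ from $\lossb^{(\weight,\bias)}(\rep)$. The only difference is cosmetic: the paper applies the local approximation once to the joint distributions inside $\nbhd_{C\eps}^{\X\times\Y}(P_XP_Y)$, whereas you first condition on $X$ and apply it to each $P_{Y|X}(\cdot\,|\,x)$ relative to $P_Y$ — the $\sqrt{P_X(x)}$ weights recombine to give the identical Frobenius-norm algebra.
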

\begin{proof}
See Appendix~\ref{app:3}.
\end{proof}
Eq.~\eqref{eq:kl:local} reveals key insights for feature selection in neural networks, which are illustrated by the following three learning problems, depending on if the weights, input feature, or both can be trained from data.



\subsection{Forward Feature Projection} \label{sec:FFP}


For the case that $\rep$ is fixed, we can optimize \eqref{eq:kl:local} with $\rmat$ fixed and get the following optimal weights: 
\begin{theorem}
For fixed $\rmat$ and $\repm$, the optimal $\wmat{}^*$ to minimize~\eqref{eq:kl:local} is given by
\begin{align} \label{eq:softmax_ace_f:1}
\wmat{}^* = \dtm\, \rmat\bigl(\bigl(\rmat\bigr)^\T \rmat\bigr)^{-1},
\end{align}
and the optimal weights $\weightp^*$ and bias $\biasdp^*$ are
\begin{align} \label{eq:softmax_ace_f:2}
\weightp^*(y) =  \Ed{P_{X|Y}}{\bLa_{\repp(X)}^{-1}\, \repp(X) \Bigm|
    Y=y}, \ \biasdp^*(y) = -\repm^\T\, \weightp(Y).
\end{align}
where $\bLa_{\repp(X)}$ denotes the covariance matrix of $\repp(X)$.
\label{thm:softmax:ace:f}
\end{theorem}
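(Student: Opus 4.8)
The plan is to exploit the clean additive decomposition of the loss provided by \lemref{lem:local:kl}: up to an $o(\eps^2)$ term,
\[
D(P_{XY}\|P_X\,\Pt^{(\weight,\bias)}_{Y|X}) = \frac{1}{2}\bfrob{\dtm - \wmat(\rmat)^\T}^2 + \frac{1}{2}\lossb^{(\weight,\bias)}(\rep),
\]
where, with $\rmat$ held fixed, the first term depends only on $\wmat$ while the second, $\lossb^{(\weight,\bias)}(\rep) = \Ed{P_Y}{(\repm^\T\weightp(Y)+\biasdp(Y))^2}$, is the only place the bias $\biasdp$ enters. My first step is to minimize over $\biasdp$ with $\wmat$ frozen: since $\lossb$ is a nonnegative expectation of a square, the choice $\biasdp^*(y) = -\repm^\T\weightp(y)$ makes it vanish identically, and this choice respects the mean-zero normalization of $\biasdp$ because $\Ed{P_Y}{\repm^\T\weightp(Y)} = \repm^\T\Ed{P_Y}{\weightp(Y)} = 0$. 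Hence the bias decouples and the problem collapses to the pure least-squares problem $\min_{\wmat}\bfrob{\dtm - \wmat(\rmat)^\T}^2$.

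The second step is to solve this least-squares problem. I would differentiate $\bfrob{\dtm - \wmat(\rmat)^\T}^2$ in $\wmat$ and set the gradient to zero, obtaining the normal equations $\dtm\,\rmat = \wmat\,(\rmat)^\T\rmat$. Because $\rmat$ is fixed and $(\rmat)^\T\rmat$ is invertible (this is where the unit-variance, uncorrelated normalization of the input features enters, guaranteeing a full-rank Gram matrix), I solve for $\wmat^* = \dtm\,\rmat\bigl((\rmat)^\T\rmat\bigr)^{-1}$, which is \eqref{eq:softmax_ace_f:1}.

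The third step is to translate this matrix identity into the feature-function statement \eqref{eq:softmax_ace_f:2}. Using the row structure $\rmat(x,:) = \sqrt{P_X(x)}\,\repp(x)^\T$, I first identify the Gram matrix $(\rmat)^\T\rmat = \sum_x P_X(x)\,\repp(x)\repp(x)^\T = \bLa_{\repp(X)}$, the covariance of $\repp(X)$ (the cross terms vanish since $\repp$ is mean-removed). Next, substituting $\dtm(y,x) = P_{XY}(x,y)/\sqrt{P_X(x)P_Y(y)} - \sqrt{P_X(x)P_Y(y)}$, the $y$th row of $\dtm\,\rmat$ becomes $\sum_x \dtm(y,x)\sqrt{P_X(x)}\,\repp(x)^\T$; the contribution from the $\sqrt{P_X(x)P_Y(y)}$ piece is $\sqrt{P_Y(y)}\sum_x P_X(x)\repp(x)^\T = 0$, again by mean-removal, leaving exactly $\sqrt{P_Y(y)}\,\Ed{P_{X|Y}}{\repp(X)^\T \mid Y=y}$. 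Reading off the $y$th row of $\wmat^* = \dtm\,\rmat\,\bLa_{\repp(X)}^{-1}$ and matching it against $\winf^*(y)^\T = \sqrt{P_Y(y)}\,\weightp^*(y)^\T$ cancels the $\sqrt{P_Y(y)}$ factor and yields $\weightp^*(y) = \bLa_{\repp(X)}^{-1}\,\Ed{P_{X|Y}}{\repp(X)\mid Y=y}$, which is \eqref{eq:softmax_ace_f:2}.

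The main obstacle I anticipate is the bookkeeping in this final translation: keeping the $\sqrt{P_X}$ and $\sqrt{P_Y}$ weights attached to the correct quantities, and recognizing that the two terms in the expansion of $\dtm$ collapse respectively to the conditional expectation and to zero through the mean-removed property of $\repp$. The earlier steps are a routine bias-decoupling argument together with the standard normal-equations computation, so I expect no difficulty there.
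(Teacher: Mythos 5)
Your proposal is correct and follows essentially the same route as the paper's proof in Appendix~\ref{app:4}: decouple the bias by setting $\biasdp^*(y) = -\repm^\T\weightp(y)$ to kill the $\lossb$ term, solve the remaining least-squares problem via the normal equations to get $\wmat^* = \dtm\,\rmat\bigl(\bigl(\rmat\bigr)^\T\rmat\bigr)^{-1}$, and then translate back to feature functions using $\bigl(\rmat\bigr)^\T\rmat = \bLa_{\repp(X)}$ and the identity that the $y$th row of $\dtm\,\rmat$ equals $\sqrt{P_Y(y)}\,\Ed{P_{X|Y}}{\repp(X)^\T \mid Y=y}$. The bookkeeping you flag as the main obstacle is exactly the correspondence computation the paper carries out, so there is no gap.
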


\begin{proof}
See Appendix~\ref{app:4}.
\end{proof}

Eq.~\eqref{eq:softmax_ace_f:1} can be viewed as a projection of the input feature $\repp(x)$, to a feature $\weight(y)$ computable from the value of $y$, which is the most correlated feature to $\repp(x)$. The solution is given by left multiplying the $\dtm$ matrix. We call this the ``forward feature projection''. 



\begin{remark} \label{remark:1}
While we assume the continuous input $\rep(x)$ is a function of a discrete variable $X$, we only need the labeled samples between $s$ and $Y$ to compute the weights and bias from the conditional expectation~\eqref{eq:softmax_ace_f:2}, and the correlation between $X$ and $s$ is irrelevant. Thus, our analysis for weights and bias can be applied to continuous input networks by just ignoring $X$ and taking $s$ as the real network input.
\end{remark}


\subsection{Backward Feature Projection}


It is also useful to consider the ``backward problem", which attempts to find informative feature $\rep^*(X)$ to minimize the
loss~\eqref{eq:kl:local} with given weights and bias.
\begin{theorem}
For fixed $\weightp$, $\wmat$, and $\biasdp$, the optimal $\rmat{}^*$ to minimize~\eqref{eq:kl:local} is given by 
\begin{align} \label{eq:softmax_ace_b:1}
\rmat{}^* = \dtm^\T \, \wmat\bigl(\bigr(\wmat\bigr)^\T
  \wmat\bigr)^{-1}, 
\end{align}
and the optimal feature function $\rep^*$, which are decomposed to $\repp^*$ and $\repm^*$, are given by
\begin{align} \notag
&\repp^*(x) =  \Ed{P_{Y|X}}{\bLa_{\weightp(Y)}^{-1}\, \weightp(Y)\middle|X=x}, \\ \label{eq:softmax_ace_b:2}
&\repm^* = -\bLa_{\weightp(Y)}^{-1}\, \Ed{P_Y}{\weightp(Y)\,\biasdp(Y)},
\end{align}
where $\bLa_{\weightp(Y)}$ denotes the covariance matrix of $\weightp(Y)$.
\label{thm:softmax:ace:b}
\end{theorem}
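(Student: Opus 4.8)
The plan is to exploit the loss decomposition from \lemref{lem:local:kl} and to observe that its two terms separate the contributions of the zero-mean part $\repp$ (encoded in $\rmat$) from that of the mean $\repm$. Indeed, $\rmat$ is assembled only from $\repp(x) = \rep(x) - \repm$, so the Frobenius term $\frac12\bfrob{\dtm - \wmat(\rmat)^\T}^2$ does not depend on $\repm$, whereas $\lossb^{(\weight,\bias)}(\rep) = \Ed{P_Y}{(\repm^\T\weightp(Y) + \biasdp(Y))^2}$ does not depend on $\rmat$. Hence, with $\weightp$, $\wmat$, $\biasdp$ held fixed, minimizing~\eqref{eq:kl:local} over $\rep^*$ decouples into an unconstrained least-squares problem for $\rmat$ and a separate quadratic minimization for $\repm$.

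For $\rmat^*$ I would rewrite $\bfrob{\dtm - \wmat(\rmat)^\T}^2 = \bfrob{\dtm^\T - \rmat(\wmat)^\T}^2$ and recognize this as an ordinary linear least-squares objective in $\rmat$. Setting its gradient to zero yields the normal equations $\rmat\,(\wmat)^\T\wmat = \dtm^\T\wmat$, whose solution is $\rmat^* = \dtm^\T\wmat((\wmat)^\T\wmat)^{-1}$, i.e.~\eqref{eq:softmax_ace_b:1}; this is precisely the transpose-dual of the forward computation in \thmref{thm:softmax:ace:f}.

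Next I would translate this matrix identity back into the feature function $\repp^*$. Reading off the $x$th row and using $\rinf(x) = \sqrt{P_X(x)}\,\repp(x)$, the heart of the computation is to evaluate the $x$th row of $\dtm^\T\wmat$: since $\dtm(y,x)\sqrt{P_Y(y)} = (P_{XY}(x,y) - P_X(x)P_Y(y))/\sqrt{P_X(x)}$ and $\weightp$ is zero-mean under $P_Y$ (so $\sum_y P_Y(y)\weightp(y) = 0$), the product-distribution term drops and the row collapses to $\sqrt{P_X(x)}\,\Ed{P_{Y|X}}{\weightp(Y)^\T \mid X=x}$. Combined with $(\wmat)^\T\wmat = \Ed{P_Y}{\weightp(Y)\weightp(Y)^\T} = \bLa_{\weightp(Y)}$ (again using zero-mean), this gives $\repp^*(x) = \bLa_{\weightp(Y)}^{-1}\Ed{P_{Y|X}}{\weightp(Y)\mid X=x}$, as claimed in~\eqref{eq:softmax_ace_b:2}.

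Finally, for $\repm^*$ I would minimize the quadratic $\Ed{P_Y}{(\repm^\T\weightp(Y) + \biasdp(Y))^2}$ over $\repm$; its stationarity condition is $\bLa_{\weightp(Y)}\repm + \Ed{P_Y}{\weightp(Y)\,\biasdp(Y)} = 0$, which inverts to the stated $\repm^*$. I expect the main obstacle to be the back-translation step: one must carefully combine the precise form of the CDM entries with the zero-mean property of $\weightp$ to convert the matrix expression into a conditional expectation, while keeping strict track of which quantities are held fixed so that the decoupling into the $\rmat$ and $\repm$ subproblems is legitimate.
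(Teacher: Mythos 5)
Your proposal is correct and follows essentially the same route as the paper: the paper also starts from the decomposition in Lemma~\ref{lem:local:kl}, obtains the matrix solution by setting the gradient of the Frobenius term to zero, converts it to a conditional expectation via the same row-by-row computation with the CDM entries and the zero-mean property (together with $\bigl(\wmat\bigr)^\T\wmat = \bLa_{\weightp(Y)}$), and gets $\repm^*$ from the stationarity condition of the quadratic. The only cosmetic difference is that the paper works out the forward case in detail and invokes symmetry for the backward one, whereas you compute the backward case directly.
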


\begin{proof}
See Appendix~\ref{app:4}.
\end{proof}

The solution of this backward feature projection is precisely symmetric to the forward one. Note we assumed here that the feature $\rep(X)$ can be selected as any desired function. This is only true in the ideal case where the previous hidden layers of the neural network have sufficient expressive power. That is, it can generate the desired feature function as given in \eqref{eq:softmax_ace_b:2}. In general, however, the form of feature functions that can be generalized is often limited by the network structure. In the next section, we discuss such cases, where we do know the most desirable feature function as given in \eqref{eq:softmax_ace_b:2}, and the question is how does a network with limited expressive power approximate this optimal solution. 

\subsection{Universal Feature Selection}

When both $\rep$ and $(\weight, \bias)$ (and hence $\rmat, \wmat, d$) can be designed, the optimal $(\wmat, \rmat)$ corresponds to the low rank factorization of $\dtm$, and the solutions coincide with the universal feature selection.
\begin{theorem} \label{thm:softmax:ace}
The optimal solutions for weights and bias to maximize~\eqref{eq:kl:local} are given by $\biasdp(y) = -\repm^\T \weightp(y)$, and $(\wmat, \rmat)^*$ chosen as the largest $k$ left and right singular vectors of $\dtm$.
\end{theorem}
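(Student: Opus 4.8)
The plan is to reduce the problem to a low-rank matrix approximation via \lemref{lem:local:kl} and then invoke the Eckart--Young theorem. By \lemref{lem:local:kl}, in the local regime minimizing the divergence $D( P_{XY} \| P_X \, \Pt^{(\weight,\bias)}_{Y|X})$ is equivalent, up to $o(\eps^2)$, to minimizing
\begin{align*}
\frac12 \bfrob{ \dtm - \wmat (\rmat)^\T }^2 + \frac12\, \lossb^{(\weight,\bias)}(\rep),
\end{align*}
with $\lossb^{(\weight,\bias)}(\rep) = \Ed{P_Y}{(\repm^\T \weightp(Y) + \biasdp(Y))^2}$. The first thing I would observe is that the two terms decouple: $\wmat$ and $\rmat$ are built from the mean-removed parts $\weightp,\repp$ only, so the Frobenius term is independent of both the mean $\repm$ and the bias $\biasdp$, while $\lossb^{(\weight,\bias)}(\rep)\ge 0$ depends on $\biasdp$ alone (given $\weightp,\repm$). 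Since $\biasdp$ enters nowhere else, the optimal choice is the one that annihilates the nonnegative term, namely $\biasdp(y) = -\repm^\T \weightp(y)$ for every $y$, which proves the first claim and reduces the objective to $\bfrob{ \dtm - \wmat (\rmat)^\T }^2$.

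It remains to solve $\min_{\wmat,\rmat} \bfrob{ \dtm - \wmat (\rmat)^\T }^2$. Here $\wmat \in \reals^{|\Y|\times k}$ and $\rmat \in \reals^{|\X|\times k}$, so $\wmat (\rmat)^\T$ sweeps out all $|\Y|\times|\X|$ matrices of rank at most $k$. The Eckart--Young theorem then identifies the minimizer with the truncated SVD of $\dtm$ from \lemref{lem:dtm:svd}, i.e.\ $\sum_{i=1}^{k}\sigma_i\,\bpsi^Y_i(\bpsi^X_i)^\T$; equivalently, the column spaces of $\wmat^*$ and $\rmat^*$ must coincide with the leading $k$ left- and right-singular subspaces of $\dtm$. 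Because the factorization is determined only up to a gauge $\wmat\mapsto\wmat M$, $\rmat\mapsto\rmat M^{-\T}$ for invertible $M$, the singular values may be absorbed into either factor, and the stated solution --- the leading $k$ singular vectors --- is one representative.

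The main obstacle is feasibility, since $\rmat$ and $\wmat$ are not arbitrary: each column of $\rmat$ is an information vector of a zero-mean feature, forcing $\rmat^\T\bpsi^X_K = 0$, and likewise $\wmat^\T\bpsi^Y_K = 0$, where $\bpsi^X_K$ and $\bpsi^Y_K$ are the vectors with entries $\sqrt{P_X(x)}$ and $\sqrt{P_Y(y)}$ from \lemref{lem:dtm:svd}. The point to verify is that the unconstrained Eckart--Young optimum already meets these constraints: as $k\le K-1$, orthonormality of the SVD gives $\bpsi^X_i\perp\bpsi^X_K$ and $\bpsi^Y_i\perp\bpsi^Y_K$ for $i=1,\dots,k$, so the leading singular vectors are legitimate zero-mean information vectors. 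Hence the constrained and unconstrained minima agree, $(\wmat,\rmat)^*$ given by the top $k$ singular vectors is globally optimal, and translating back through $\repp(x)=\rinf(x)/\sqrt{P_X(x)}$ and $\weightp(y)=\winf(y)/\sqrt{P_Y(y)}$ yields exactly the feature functions of the universal feature selection problem, as claimed.
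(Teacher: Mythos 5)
Your proof is correct and follows essentially the same route as the paper's: use Lemma~\ref{lem:local:kl} to decouple the objective, zero out the nonnegative bias term via $\biasdp(y) = -\repm^\T \weightp(y)$, and apply Eckart--Young--Mirsky to the remaining rank-$k$ Frobenius approximation of $\dtm$. Your explicit check that the leading $k \le K-1$ singular vectors automatically satisfy the zero-mean constraints (orthogonality to $\bpsi^X_K$ and $\bpsi^Y_K$) is a welcome detail that the paper leaves implicit.
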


\begin{proof}
See Appendix~\ref{app:5}.
\end{proof}

Therefore, we conclude that the softmax regression, when both $\rep$ and $(\weight, \bias)$ are designable, is to extract the most correlated aspects of the input data $X$ and the label $Y$ that are informative features for data inferences from universal feature selection.

In the learning process of DNN, the BackProp procedure alternatively chooses the weights of the softmax layer and those on the previous layer(s). In each step, the weights on the rest of the network are fixed. This is equivalent as alternating between the forward and the backward feature projections, i.e. it alternates between \eqref{eq:softmax_ace_f:1} and \eqref{eq:softmax_ace_b:1}. This is in fact the power method to solve the SVD for $\dtm$~\cite{stoer2013introduction}, which is also known as the Alternating Conditional Expectation (ACE) algorithm~\cite{Breiman85}. 




\newcommand{\actfun}{\sigma} 
\newcommand{\bhref}{q} 
\newcommand{\bhjmat}{\mathbf{J}} 
\newcommand{\bhwmatv}{\mathbf{W}} 
\section{Multi-Layer Network Analysis}


From the previous discussions, the performance of the softmax
regression not only depends on the weight and bias $(\hweight(y),
\hbias(y))$, but the input feature $\hrep(x)$ has to be
informative. It turns out that the hidden layers of neural networks,
which are known to have strong expressive power of features, are
essentially extracting such informative features. For illustration, we
consider the neural network with a hidden layer of
$\hrdim$ nodes, and a zero-mean continuous input $\bhrep = [\bhrep_1 \ \cdots \ \bhrep_m]^{\T} \in \mathbb{R}^{\bhrdim}$ to this hidden layer, where $\bhrep$ is assumed to be a function $\bhrep(x)$ of some discrete variable $X$\footnote{As discussed in Remark~\ref{remark:1}, $X$ is assumed only for the convenience of analysis, and the computation of weights and bias only needs $\bhrep$, but not $X$. Moreover, the input $\bhrep$ to the hidden layer can be either directly from data or the output of previous hidden layers in a DNN, which we model as ``pre-processing" as shown in Fig. \ref{fig:MLP}.}. Our goal is to analyze the weights and bias in this layer with labeled samples $(\bhrep(x_i), y_i)$.
Assume the activation function of the hidden layer is a generally smooth function $\actfun(\cdot)$, then the output $\hrep_{z}(X)$ of the $z$-th hidden node is
\begin{align} \label{eq:layer_s}
\hrep_{z}(x) = \actfun\left(\bhweight^\T(z) \bhrep(x) + \bhbias(z)\right), \quad \text{for $z = 1, \ldots , k$}, \ x \in \cX,
\end{align}
where $\bhweight(z) \in \mathbb{R}^m$ and $\bhbias(z) \in \mathbb{R}$ are the weights and bias from input layer to hidden layer as shown in Fig. \ref{fig:MLP}. We denote $\hrep = [\hrep_{1} \ \cdots \ \hrep_{\hrdim}]^\T$ as the input vector to the output softmax regression layer.


\begin{figure}[t]
  \centering
  \resizebox {.8\columnwidth} {!}{\def\layersep{2cm}
\begin{tikzpicture}[shorten >=1pt,->, node distance=\layersep, draw=white!60!black!20!blue]
  \tikzstyle{every pin edge}=[<-,shorten <=1pt]
  \tikzstyle{neuron}=[line width = .8, draw = black, circle, minimum size=18pt,inner sep=0pt];
  \tikzstyle{cdot}=[draw=none, fill=none, execute at begin node=\color{black}$\cdots$];
  \tikzstyle{input neuron}=[neuron]; 
  \tikzstyle{hidden neuron}=[neuron]; 
  \tikzstyle{output neuron}=[neuron]; 
  \tikzstyle{annot} = [text width=4em, text centered];

  \node [input neuron] (input) at (0,-2.5 cm) {$X$};

  \node[rectangle, draw = black, line width = .8, fill = black!10!white, text width = 1.3em, minimum height = 3em] (rect) at (.5*\layersep,-2.5) {};
  \node[annot, text width = 1.3em] at (.5*\layersep,-2.5) {\tiny{\sf{Pre-Proc.}}};

  
\foreach \m [count=\y] in {1, missing, 2}
  \node [hidden neuron/.try, neuron \m/.try, yshift=-1.1cm] (hiddenA-\m) at (\layersep, -0.7 * \y cm) {};

\foreach \m [count=\y] in {$\bhrep_{1}$, , $\bhrep_{\bhrdim}$}
  \node [yshift=-1.1cm] at (\layersep, -0.7 * \y cm) {\small\m};

  
\foreach \m [count=\y] in {1, missing, 2}
  \node [hidden neuron/.try, neuron \m/.try, yshift=-1.1cm] (hiddenB-\m) at (2*\layersep, -0.7 * \y cm) {};

\foreach \m [count=\y] in {$\hrep_{1}$, , $\hrep_{\hrdim}$}
  \node [yshift=-1.1cm] at (2*\layersep, -0.7 * \y cm) {\small\m};

  \foreach \text / \m [count=\y] in {$Y = 1$/1, $Y = 2$/2, /missing, $Y = {\ydim}$/3}
  \node [rect neuron/.try, neuron \m/.try] (output-\m) at (3*\layersep,-\y) {\text}; 

  \node [bias neuron] (hiddenA-bias) at (\layersep, -4 cm) {\scriptsize{+1}};
  \node [bias neuron] (hiddenB-bias) at (2*\layersep, -4 cm) {\scriptsize{+1}};
    
\foreach \text [count=\i] in {1, 2}
   \draw [->, line width = .6, opacity = 0.4, draw=white!60!black!20!blue] (hiddenA-bias) -- (hiddenB-\i);
\foreach \text / \xs / \ys [count=\i] in {$\bhbias(1)$ / .0cm / -.1cm, $\bhbias(\hrdim)$  / .0cm / -.2cm}
   \draw [->, line width = .6, draw = none, opacity = 0.6] (hiddenA-bias) -- (hiddenB-\i) node[pos=0.25, xshift = \xs, yshift = \ys]{\tiny\text};

\foreach \text [count=\i] in {1, 2, 3}
   \draw [->, line width = .6, opacity = 0.4] (hiddenB-bias) -- (output-\i.west);
\foreach \text  / \xs / \ys [count=\i] in {$\hbias(1)$  / .0cm / .15cm, $\hbias(2)$ / .0cm / -.2cm, $\hbias(\ydim)$  / .0cm / -.11cm}
   \draw [->, line width = .6, draw = none, opacity = 0.6] (hiddenB-bias) -- (output-\i) node[pos=0.25, xshift = \xs, yshift = \ys]{\tiny\text};
   

\foreach \i in {1,...,2}
  \foreach \j in {1,...,2}
    \draw[->, line width = .6, draw=white!60!black!20!blue] (hiddenA-\i) -- (hiddenB-\j);

    \foreach \text / \xs / \ys [count=\i] in {$\bhweight(1)$ / 0cm / 0cm , $\bhweight(\hrdim)$ / 0cm / -0.1cm}
    \draw [draw = none] (hiddenA-missing) -- (hiddenB-\i) node[xshift = \xs, yshift = \ys, pos=0.75]{\scriptsize\text};
    
\foreach \i in {1,...,2}
  \foreach \j in {1,...,3}
    \draw [->, line width = .6] (hiddenB-\i) -- (output-\j.west);

\foreach \i in {1,...,2}
    \draw [->, line width = .6] (rect) -- (hiddenA-\i);
    \draw [->, line width = .6] (input) -- (rect);
    
\foreach \text / \xs / \ys [count=\i] in {$\hweight(1)$ / 0 cm / 0.2 cm, $\hweight(2)$ / 0 cm / 0.08 cm, $\hweight(\ydim)$ / 0 cm / -0.16 cm}
   \draw [draw = none] (hiddenB-missing) -- (output-\i) node[pos=0.65, xshift = \xs, yshift = \ys]{\scriptsize\text};
    


\end{tikzpicture}

 \caption{A multi-layer network: all hidden layers previous to $t$ are labeled as ``pre-processing".}
  \label{fig:MLP}
\end{figure}
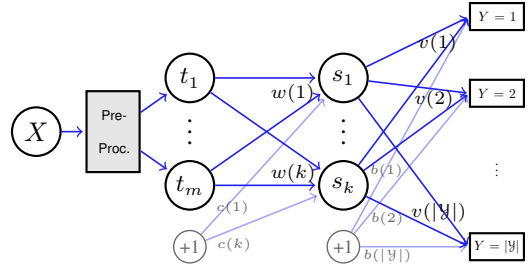

To interpret the feature selection in hidden layers, we fix $(\hweight(y), \hbias(y))$ at the output layer, and consider the problem of designing $(\bhweight(z), \bhbias(z))$ to minimize the loss function~\eqref{eq:KL_softmax} of the softmax regression at the output layer. Ideally, we should have picked $\bhweight(z)$ and $\bhbias(z)$ to generate $\hrep(x)$ to match $\hrep^*(x)$ from~\eqref{eq:softmax_ace_b:2}, which minimizes the loss. However, here we have the constraint that $\hrep(x)$ must take the form of~\eqref{eq:layer_s}, and intuitively the network should select $w(z), \bhbias(z)$ so that $\hrep(x)$ is close to $\hrep^*(x)$. Our goal is to quantify the notion of closeness in the local analysis regime.


To develop insights on feature selection in hidden layers, we again focus on the local analysis regime, where the weights and bias are assumed with the local constraint 
\begin{align} \label{eq:MLC}
\bigl|\weightp^\T(y)\rep(x) + \biasdp(y)\bigr| = O(\eps),   \  \bigl|\bhweight^\T(z)\bhrepp(x)\bigr|  = O(\eps), \ \forall x,y,z.
\end{align}
Then, since $\bhrep$ is zero-mean, we can express~\eqref{eq:layer_s} as
\begin{align} 
\hrep_{z}(x) 
&= \actfun\left(\bhweight^\T(z) \bhrep(x) + \bhbias(z)\right) \notag\\
&= \bhweight^\T(z)\bhrepp(x) \cdot \actfun'\left(\bhbias(z)\right) + \actfun\left(\bhbias(z)\right) + o(\eps),\label{eq:layer_s:2}
\end{align}
Moreover, we define a matrix $\hdtm$ with the $(z,x)$th entry $\hdtm(z,x) = \frac{\sqrt{P_X(x)}}{\actfun'(\bhbias(z))} \hrepp_{z}^{*}(x)$,
which can be interpreted as a generalized DTM for the hidden layer. Furthermore, we denote $\rinf_1(x) = \sqrt{P_X(x)}\, \bhrepp(x)$ as the information vector of $\bhrepp(x)$ with the matrix $\bhrmat$ defined as $\bhrmat \defeq \begin{bmatrix} \rinf_1(1) & \cdots & \rinf_1(|\X|) \end{bmatrix}^\T$, and we also define 
\begin{align*}
\bhwmatv &\defeq 
           \begin{bmatrix}
             \bhweight(1)& \cdots & \bhweight(\hrdim)
           \end{bmatrix}^{\T} \\
\bhjmat &\defeq \diag\{ \actfun'(\bhbias(1)), \actfun'(\bhbias(2)), \cdots, \actfun'(\bhbias(\hrdim)) \}.
\end{align*}
The following theorem characterizes the loss~\eqref{eq:KL_softmax}.

\begin{theorem} \label{thm:mlp:ace}
Given the weights and bias $(\weight, \bias)$ at the output layer, and for any input feature $\hrep$, we denote $\loss (\hrep)$ as the loss~\eqref{eq:KL_softmax} evaluated with respect to $(\weight, \bias)$ and $\hrep$. Then, with the constraints~\eqref{eq:MLC}
\begin{align} \notag
&\loss (\hrep) - \loss (\hrep^*) \\ \label{eq:thm2}
&= \frac{1}{2} \bfrob{ \prior \hdtm -  \prior\bhwmatv \bigl(\bhrmat\bigr)^\T}^2 + \frac{1}{2} \lossbh^{(\weight, \bias)} (\hrep, \hrep^*) + o(\eps^2),
\end{align}
where $\prior \defeq (\bigl(\wmat\bigr)^\T\wmat)^{1/2} \bhjmat$, and the term $\lossbh^{(\weight, \bias)} (\hrep, \hrep^*) $ $= (\hrepm - \hrepsm)^\T\bLa_{\weightp(Y)}(\hrepm - \hrepsm)$.
\end{theorem}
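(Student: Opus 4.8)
The plan is to apply \lemref{lem:local:kl} separately to the given feature $\hrep$ and to the backward-optimal feature $\hrep^*$, and then subtract. Since the local expansion in \lemref{lem:local:kl} splits the loss into a Frobenius term $\tfrac12\bfrob{\dtm-\wmat(\rmat)^\T}^2$ that depends only on the (mean-removed) information matrix of the input feature, plus a term $\tfrac12\lossb^{(\weight,\bias)}$ that depends only on the feature mean, these two contributions decouple and can be handled in turn. Writing $\rmat$ for the information matrix of $\hrepp$ and $\rmat^*$ for that of $\hrepp^*$ (given by \eqref{eq:softmax_ace_b:1}), I get
\begin{align*}
\loss(\hrep)-\loss(\hrep^*)
&= \tfrac12\Bigl(\bfrob{\dtm-\wmat(\rmat)^\T}^2 - \bfrob{\dtm-\wmat(\rmat^*)^\T}^2\Bigr) \\
&\quad + \tfrac12\Bigl(\lossb^{(\weight,\bias)}(\hrep)-\lossb^{(\weight,\bias)}(\hrep^*)\Bigr) + o(\eps^2).
\end{align*}

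For the Frobenius part, the key observation is that the backward projection \eqref{eq:softmax_ace_b:1} makes $\wmat(\rmat^*)^\T = \wmat\bigl(\bigl(\wmat\bigr)^\T\wmat\bigr)^{-1}\bigl(\wmat\bigr)^\T\dtm$ exactly the orthogonal projection of $\dtm$ onto the column space of $\wmat$. Hence $\dtm-\wmat(\rmat^*)^\T$ lies in the orthogonal complement of that column space while $\wmat\bigl((\rmat^*-\rmat)^\T\bigr)$ lies inside it, so a Pythagorean identity yields
\[
\bfrob{\dtm-\wmat(\rmat)^\T}^2 - \bfrob{\dtm-\wmat(\rmat^*)^\T}^2 = \bfrob{\wmat\bigl((\rmat^*-\rmat)^\T\bigr)}^2.
\]
Next I would linearize the activation via \eqref{eq:layer_s:2}: under the local constraint \eqref{eq:MLC}, $\hrepp_z(x)=\actfun'(\bhbias(z))\,\bhweight^\T(z)\bhrepp(x)+o(\eps)$ and $\hrepm_z=\actfun(\bhbias(z))+o(\eps)$, which gives $(\rmat)^\T=\bhjmat\bhwmatv(\bhrmat)^\T+o(\eps)$ entrywise; meanwhile the definition of $\hdtm$ gives exactly $(\rmat^*)^\T=\bhjmat\hdtm$. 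Factoring the common $\bhjmat$ and inserting $A^{1/2}$ with $A\defeq\bigl(\wmat\bigr)^\T\wmat$, I obtain $A^{1/2}(\rmat^*-\rmat)^\T = A^{1/2}\bhjmat\bigl(\hdtm-\bhwmatv(\bhrmat)^\T\bigr)=\prior\hdtm-\prior\bhwmatv(\bhrmat)^\T$, and since $\bfrob{\wmat M}^2=\bfrob{A^{1/2}M}^2$ this reproduces the first target term $\tfrac12\bfrob{\prior\hdtm-\prior\bhwmatv(\bhrmat)^\T}^2$.

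For the mean part, $\lossb^{(\weight,\bias)}(\rep)=\Ed{P_Y}{(\repm^\T\weightp(Y)+\biasdp(Y))^2}$ is a convex quadratic in $\repm$ whose minimizer is precisely $\hrepsm=\repm^*$ from \eqref{eq:softmax_ace_b:2}. Completing the square about this minimizer gives $\lossb^{(\weight,\bias)}(\hrep)-\lossb^{(\weight,\bias)}(\hrep^*)=(\hrepm-\hrepsm)^\T\bLa_{\weightp(Y)}(\hrepm-\hrepsm)$, which is exactly $\lossbh^{(\weight,\bias)}(\hrep,\hrep^*)$ and furnishes the second term.

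The main obstacle is the bookkeeping in the linearization step: I must verify that the $o(\eps)$ errors in \eqref{eq:layer_s:2} and in the mean subtraction contribute only $o(\eps^2)$ to the $P_X$-weighted Frobenius norm. This uses that each information-vector entry is itself $O(\eps)$ by \eqref{eq:MLC}, so that cross terms are $O(\eps)\cdot o(\eps)=o(\eps^2)$ after summing against the weights. A secondary point to keep clean is that \lemref{lem:local:kl} rigidly separates fluctuation (the information matrix) from mean, so the two differences above never mix; this is exactly what lets $\rmat^*$ and $\hrepsm$, which arise from the two separate parts of \thmref{thm:softmax:ace:b}, be substituted independently.
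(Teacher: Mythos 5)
Your proposal is correct and follows essentially the same route as the paper: apply Lemma~\ref{lem:local:kl} to both $\hrep$ and $\hrep^*$, use the orthogonal-projection/Pythagorean identity (the paper isolates this as Lemma~\ref{lemma:pythagorean}) to reduce the difference of Frobenius terms to $\bfrob{\wmat\bigl((\rmat^*-\rmat)^\T\bigr)}^2$, linearize the activation to identify $(\rmat)^\T=\bhjmat\bhwmatv(\bhrmat)^\T+o(\eps)$ and $(\rmat^*)^\T=\bhjmat\hdtm$, and complete the square in $\hrepm$ for the mean term. Your bookkeeping remark about $O(\eps)\cdot o(\eps)=o(\eps^2)$ cross terms is exactly how the paper justifies dropping the linearization error.
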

\begin{proof}
See Appendix~\ref{app:7}.
\end{proof}
Eq.~\eqref{eq:thm2} quantifies the closeness between $\hrep$ and $\hrep^*$ in terms of the loss~\eqref{eq:KL_softmax}. Then, our goal is to minimize~\eqref{eq:thm2}, which can be separated to two optimization problems:
\begin{align} \label{eq:hidden_UFS}
\bhwmatv^* &= \argmin_{\bhwmatv}\, \bbfrob{ \prior \hdtm - \prior \bhwmatv \bigl(\bhrmat\bigr)^\T}^2, \\ \label{eq:hidden_bias}
\repm^* &= \argmin_{\repm}\, \lossbh^{(\weight, \bias)} (\hrep, \hrep^*).
\end{align}
First note that the optimization problem~\eqref{eq:hidden_UFS} is similar to the ordinary softmax regression depicted in \secref{sec:softmax}, and the optimal solution is given by $\bhwmatv^* = \hdtm \bhrmat\bigl(\bigl(\bhrmat\bigr)^\T\bhrmat\bigr)^{-1}$. Therefore, solving the optimal weights in the hidden layer can be interpreted as projecting $\hrepp^*(x)$ to the subspace of feature functions spanned by $\bhrep(x)$ to find the closest expressible function. Finally, the problem~\eqref{eq:hidden_bias} is to choose $\hrepm$ (and hence the bias $\bhbias(z)$) to minimize the quadratic term similar to $\lossb^{(v,b)}(s)$ in~\eqref{eq:kl:local}, and we refer to \appref{app:7} 
for the optimal solution of~\eqref{eq:hidden_bias}.

Overall, we observe the correspondence between \eqref{eq:softmax_ace_f:1}, \eqref{eq:softmax_ace_b:2}, and \eqref{eq:hidden_UFS}, \eqref{eq:hidden_bias}, and interpret both operations as feature projections. Our argument can be generalized to any intermediate layer in a multi-layer network, with all the previous layers viewed as the fixed pre-processing that specifies $\bhrep(x)$, and all the layers after determining $\hrep^{*}$. Then the iterative procedure in back-propagation can be viewed as alternating projection finding the fixed-point solution over the entire network. This final fixed-point solution, even under the local assumption, might not be the SVD solution as in Theorem~\ref{thm:softmax:ace}. This is because the limited expressive power of the network often makes it impossible to generate the desired feature function. In such cases, the concept of feature projection can be used to quantify this gap, and thus to measure the quality of the selected features.

\section{Scoring Neural Networks}

Given a learning problem, it is useful to tell wether or not some extracted features is informative~\cite{alain2016understanding}. Our previous development naturally gives rise to a performance metric.
\begin{definition}
Given a feature $\rep(x)\in \mathbb{R}^k$ and weight $\weight(y)\in \mathbb{R}^k$ with the corresponding information matrices $\rmat$ and $\wmat$, the H-score $H(\rep, \weight)$ is defined as
\begin{align} \notag
H(\rep, \weight) 
&\defeq \frac{1}{2} \bfrob{\dtm}^2 -  \frac{1}{2} \bfrob{ \dtm - \wmat \bigl(\rmat\bigr)^\T }^2 \\ \label{eq:Hscore_sv1}
&=\Ed{P_{XY}}{\repp^{\T}(X)\, \weightp(Y)} - \frac{1}{2} \tr
\bigl( \bLa_{\repp(X)} \bLa_{\weightp(Y)} \bigr).
\end{align}
In addition, we define the single-sided H-score $H(\rep)$ as
\begin{align} \notag
H(\rep) 
&\defeq \frac{1}{2} \bfrob{ \dtm \rmat \bigl(\bigl(\rmat\bigr)^\T
  \rmat\bigr)^{-\frac{1}{2}} }^2  \\ \label{eq:Hscore_s1} 
  &= \frac{1}{2} \Ed{P_Y}{ \left\|
  \Ed{P_{X|Y}}{\bLa_{\repp(X)}^{-1/2} \, \repp(X) \Bigm| Y} \right\|^2 }.
\end{align}
\end{definition}
H-score can be used to measure the quality of features generated at any intermediate layer of the network. It is related to~\eqref{eq:thm2} when choosing the optimal bias and $\prior$ as the identity matrix.  
This can be understood as taking the output of this layer $\rep(x)$ and directly feed it to a softmax output layer with $\weight(y)$ as the weights, and $H(\rep, \weight)$ measures the resulting performance. Note that $\weight(y)$ here can be an arbitrary function of $Y$. It is not necessarily the weights on the next layer computed by the network. When the optimal weights $\weight^*(y)$ is used, the resulting performance becomes the one-sided H-score $H(\rep)$, which measures the quality of $\rep(x)$, and coincides with the information metric~\eqref{eq:EEg-k}.


In current practice the cross-entropy $\bE{\log \Pt^{(\weight, \bias)}_{Y|X}}$,
is often used as the performance metric. One can in principle also use
log-loss to measure the effectiveness of the selected feature at the
output of an intermediate layer~\cite{alain2016understanding}. However, one problem of this metric is that
for a given problem it is not clear what value of log-loss one should
expect because the log-loss is generally unbounded. Moreover, the computation of the log-loss for optimal weights and bias with respect to a particular input feature requires solving a non-convex optimization problem with the issue of locking at local optimum. 

In contrast, the H-score can be directly computed from the data samples, and has a clear upper bound from Lemma~\ref{lem:dtm:svd} that $H(s,v) \leq H(s) \leq
(1/2) \sum_{i=1}^k \sigma_i^2 \leq k/2$.  
In this sequence of inequalities, the gap over the first ``$\leq$" measures the optimality of the weights $\weight$; the second gap is due to the difference between the chosen feature and the optimal solution, which is a useful measure of how restrictive (lack of expressive power) the network structure is; and the last one measures how good the dataset itself is. In \secref{sec:exp}, we validate this metric in real data.


\section{Experimental Validation} \label{sec:exp}





We first validate the feature projection in Theorem~\ref{thm:softmax:ace}. For this purpose, we construct the NN as shown in \figref{fig:NN} with $\rdim = 1$, $|\mathcal{X}| = 8$, and $|\mathcal{Y}| = 6$, and the input feature $\rep(X)$ is generated from a sigmoid layer with the one-hot encoded $X$ as the input. Note that with proper weights in the sigmoid layer, $\rep(X)$ can express any desired function, up to scaling and shifting.
To compare the result trained by the neural network and that in \thmref{thm:softmax:ace}, we first randomly generate a distribution $P_{XY}$, and then generate $n = 100,000$ samples of $(X, Y)$ pair. Using these data to train the neural network, the corresponding results of $\rep(x), \weight(y)$ and $\bias(y)$ are shown in \figref{fig:exp:softmax-regression} with a comparison to theoretical result, where the training results match our theory. In addition, we validate \thmref{thm:mlp:ace} by the NN depicted in \figref{fig:MLP}, with the same setup of $X,Y$. The number of neurons in hidden layers are $\bhrdim = 4$ and $\hrdim = 3$, and the input $\bhrep(X)$ is some randomly chosen 
function of $X$, and the activation $\actfun(\cdot)$ is the sigmoid function. We then fix the weights and bias at the output layer and train the weights $\bhweight(1), \bhweight(2)$, $\bhweight(3)$, and bias $\bhbias$ in the hidden layer to optimize the Log-Loss.
\figref{fig:exp:softmax-regression} shows the matching between our results and the experiment.

\begin{figure}[t]
  \centering
    \begin{subfigure}{.3\columnwidth}
      \includegraphics[width = \columnwidth]{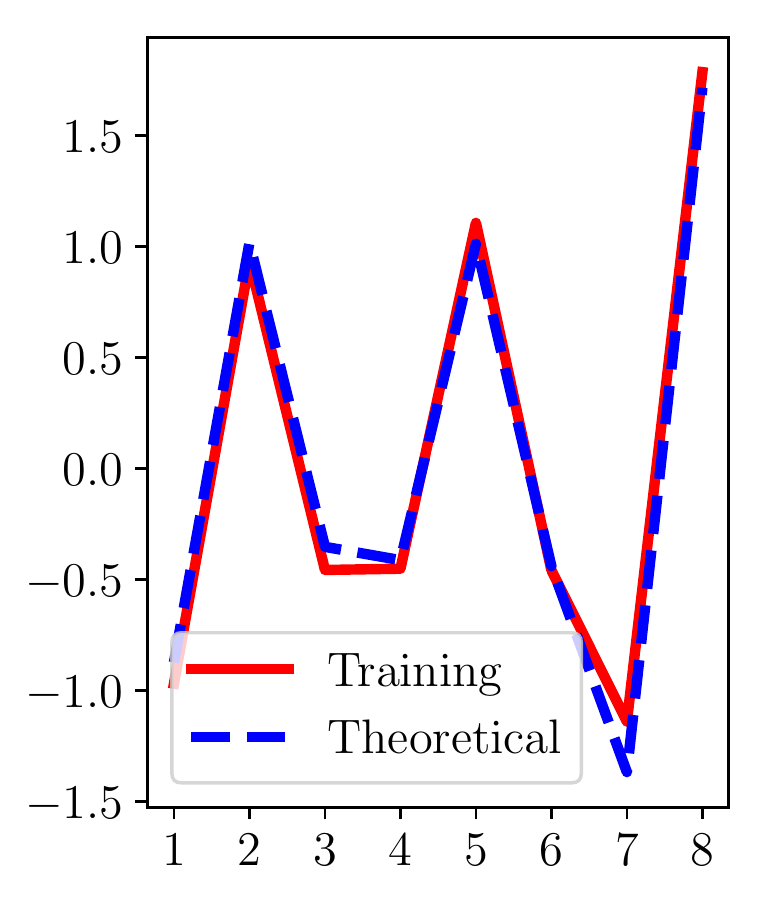}
    \caption{$\rep(x)$}
    \label{fig:exp:softmax-regression:s}
  \end{subfigure}
  \begin{subfigure}{.3\columnwidth}
    \includegraphics[width = \columnwidth]{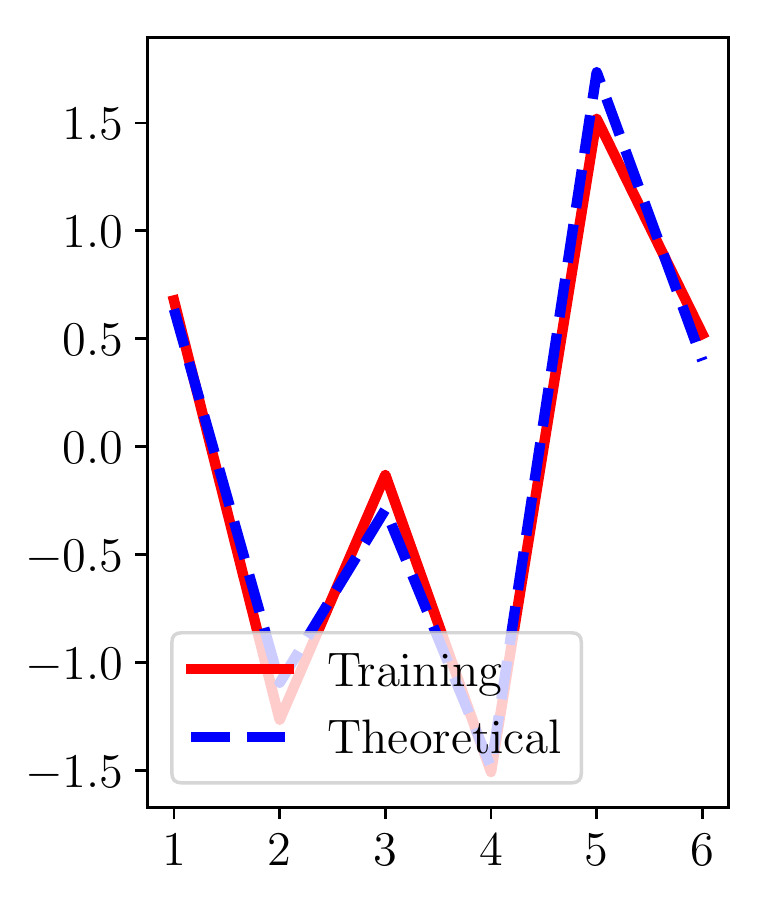}
    \caption{$\weight(y)$}
    \label{fig:exp:softmax-regression:v}
  \end{subfigure}
  \begin{subfigure}{.3\columnwidth}
    \includegraphics[width = \columnwidth]{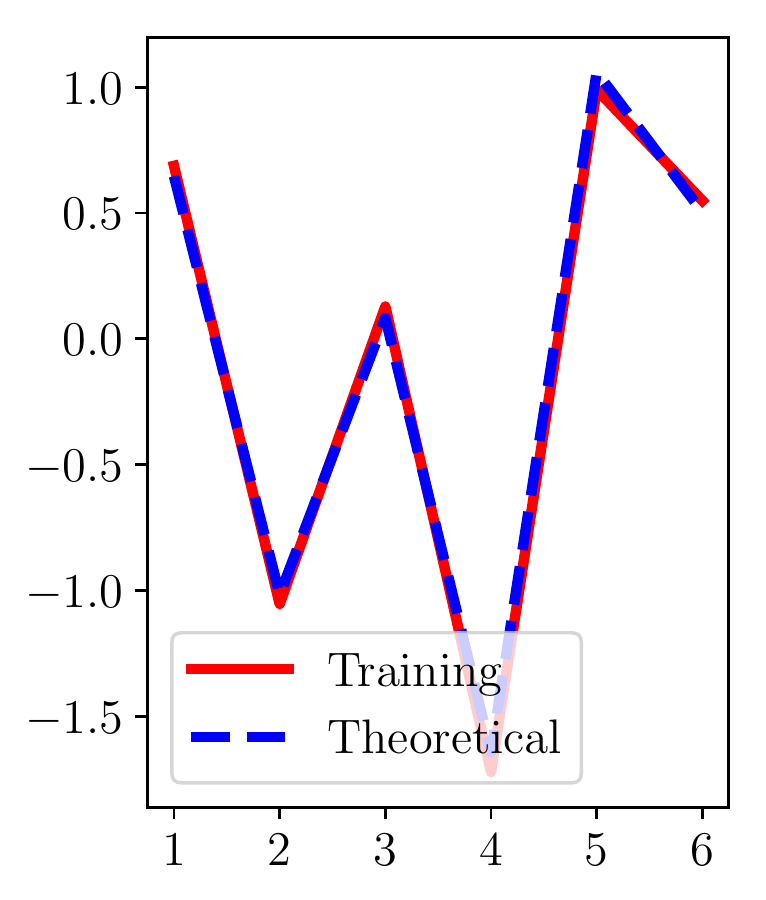}
    \caption{$\bias(y)$}
    \label{fig:exp:softmax-regression:b}
  \end{subfigure}
  \caption{The comparisons of the weights and bias in softmax regression.
  }
  \label{fig:exp:softmax-regression}
\end{figure}

\begin{figure}[t]
  \centering
  \hspace{-2em}
  \begin{subfigure}{.74\columnwidth}
    \includegraphics[height = 2.9cm]{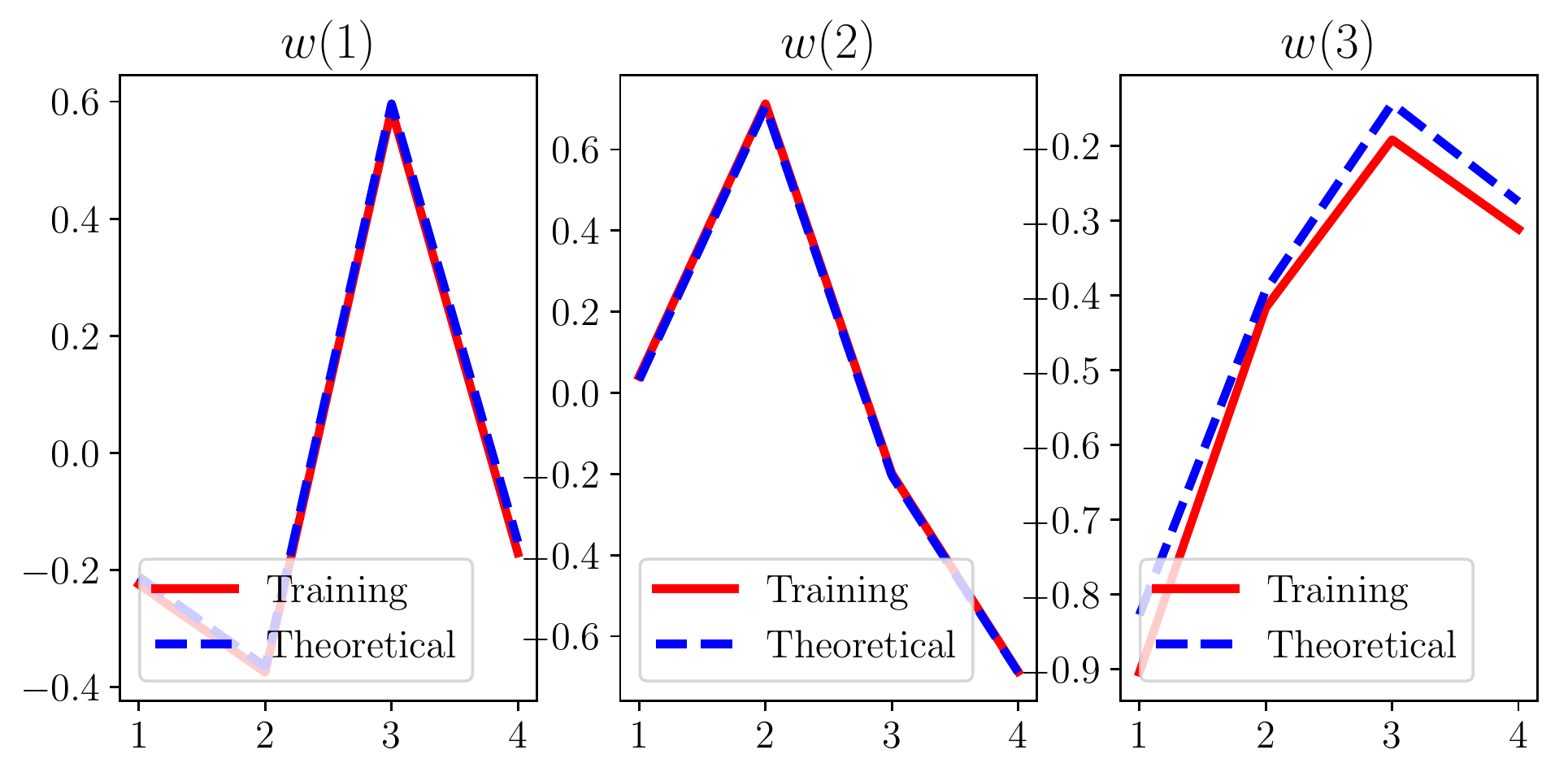}
    \caption{$\bhweight(1), \bhweight(2)$ and $\bhweight(3)$}
  \end{subfigure}
  \hspace{-1em}
  \begin{subfigure}{.24\columnwidth}
    \includegraphics[height = 2.9cm]{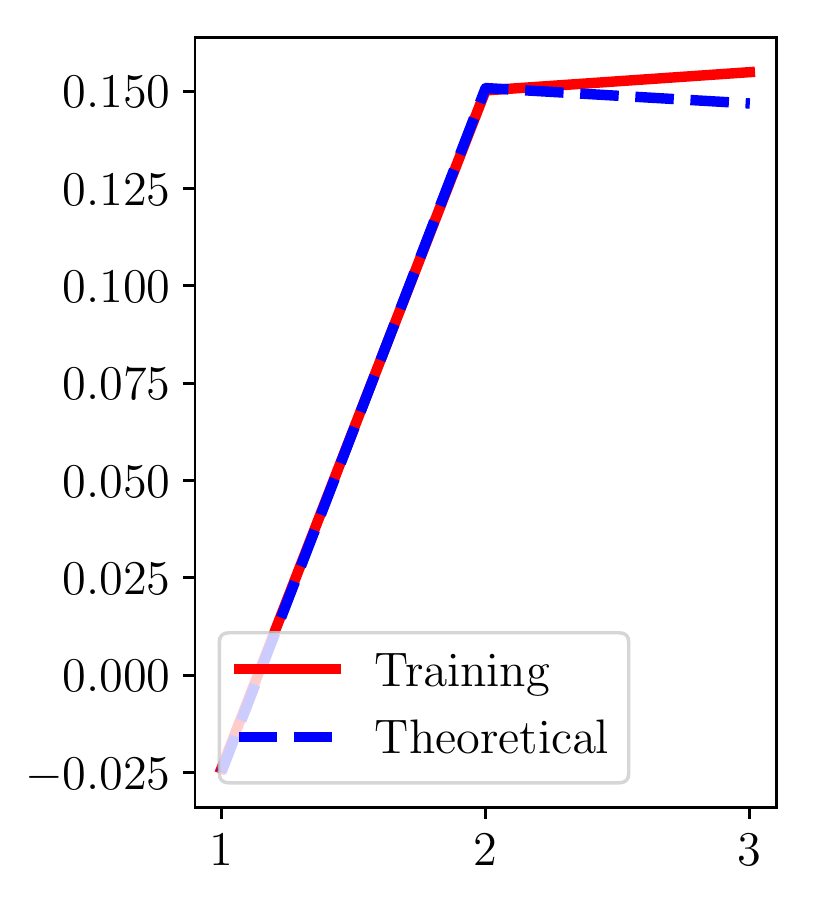}
    \caption{$\bhbias$}
  \end{subfigure}
  \caption{The comparisons of the weights and bias in the hidden layer.
  }
  \label{fig:exp:weight-hidden}
\end{figure}

\begin{table}[t]
  \centering
  \begin{tabular}{cccc}
    \toprule
    Model&          $H(\rep)$ &  $H_{\rm{AIC}}(\rep)$&     $\mathrm{Accaracy}$ \\
    \midrule
    VGG16&             148.3&    41.9& 0.642\\
    VGG19&             152.7&    42.2& 0.647\\
    MobileNet&          45.9&    42.6& 0.684\\
    DenseNet121&        59.5&    53.3& 0.714\\
    DenseNet169&       81.2&     70.2& 0.736\\
    DenseNet201&       89.1&     73.5& 0.744\\
    Xception&          179.8&    162.2& 0.775\\
    InceptionV3&       181.2&    162.9& 0.763\\
    InceptionResNetV2& 241.1&    198.1& 0.791\\
    \bottomrule
  \end{tabular}
  \captionof{table}[t]{$H_{\rm{AIC}}$ is the H-score with AIC correction.}
  \label{tab:h-score:imagenet}
\end{table}

Furthermore, we compare the performance of classification and the H-score evaluated from the features extracted from the last hidden layer of different DNNs. We use the ILSVRC2012~\cite{ILSVRC15} as our validation data set, and train several state-of-art DNNs~\cite{simonyan2014very, szegedy2017inception, chollet2016xception, szegedy2016rethinking, howard2017mobilenets, huang2017densely} to extract the features from the last hidden layers. The resulting H-scores are shown in TABLE \ref{tab:h-score:imagenet} with the comparison to the classification accuracy, where $H_{\mathrm{AIC}}$ is the H-score with the correction of Akaike information criterion (AIC) \cite{akaike1998information} to reduce overfitting. In particular, $H_{\mathrm{AIC}}(s)$ is given by
\begin{equation}
  H_{\mathrm{AIC}}(s) = H(s) - \frac{n_{\mathrm{p}}}{n_{\mathrm{s}}},
\end{equation}
where $n_{\mathrm{p}}$ is the number of parameters contained in the model, and $n_{\mathrm{s}} = 1,300,000$ is the number of training samples in ImageNet. The corrected H-score is consistent with the accuracy, which validates the H-score. 

\section*{Acknowledgment}
The research of Shao-Lun Huang was funded by the Natural Science Foundation of China 61807021,  Shenzhen Science and Technology Research and Development Funds (JCYJ20170818094022586), and Innovation and entrepreneurship project for overseas high-level talents of Shenzhen (KQJSCX2018032714403783).









\appendix
\newcommand{\svpd}{\tau} 
\newcommand{\TV}{1}
\renewcommand{\eqspace}{}

\subsection{Proof of Theorem~\ref{thm:UFS}} \label{app:1}
\newcommand{\xedit}[1]{\textcolor{red}{#1}}
We commence with the characterization of the error exponent.
\begin{lemma}
  Given a reference distribution $P_X \in \relint(\simpX)$, a constant $\eps > 0$ and integers $n$ and $k$, let $x_1, \cdots, x_n$ denote i.i.d. samples from one of $P_1$ or $P_2$, where $P_1, P_2 \in \nbhd_\eps^\X(P_X)$. To decide whether $P_1$ or $P_2$ is the generating distribution, a sequence of $k$-dimensional statistics $h^k = (h_1,\dots,h_k)$ is constructed as
  \begin{equation}
    h_i = \frac1n \sum_{l=1}^n f_i(x_l), \quad i =  1, \ldots , k,
    \label{eq:17}
  \end{equation}
  where $(f_1(X), \dots, f_k(X))$ are zero mean, unit-variance, and uncorrelated with respect to $P_X$, i.e.,
  \begin{subequations}
    \begin{gather}
      \Ed{P_X}{f_i(X)} = 0, \quad i \in \{1, \dots, k\}\label{eq:f:0mean}\\
      \Ed{P_X}{f_i(X)f_j(X)} = \delta_{ij}, \quad i, j \in \{1, \dots, k\}.\label{eq:f:uncorr}
    \end{gather}
    \label{eq:f:normal}
  \end{subequations}
  Then the error probability of the decision based on $h^k$ decays exponentially in $n$ as $n \to \infty$, with (Chernoff) exponent
  \begin{subequations}
  \begin{equation}
    \lim_{n \to \infty} \frac{-\log p_e}{n} \defeq E_{h^k} = \sum_{i = 1}^k E_{h_i},
    \label{eq:mm-exp-k}
  \end{equation}
  where
  \begin{equation}
    E_{h_i} = \frac{1}{8}\ip{\phi_1-\phi_2}{\fvgen_i}^2 + o(\eps^2),
  \end{equation}
  \label{eq:exponent-k}
\end{subequations}
   and $\phi_1 \leftrightarrow P_1, \phi_2 \leftrightarrow P_2, \fvgen_i \leftrightarrow f_i(X), i \in \{1, \dots, k\}$ are the corresponding information vectors.
\label{lemma:exponent:multi}  
\end{lemma}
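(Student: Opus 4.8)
The plan is to recognize the test based on $h^k$ as a large-deviations discrimination problem for the empirical mean of the feature vector $f^k(X)\defeq(f_1(X),\dots,f_k(X))$, to expand its Chernoff exponent to second order in $\eps$, and to read off the additive structure from the orthonormality of the features.

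First I would translate the moment conditions \eqref{eq:f:0mean}--\eqref{eq:f:uncorr} into information-vector geometry. By the three-way correspondence, $\fvgen_i(x)=\sqrt{P_X(x)}\,f_i(x)$, so \eqref{eq:f:0mean}--\eqref{eq:f:uncorr} read $\ip{\sqrt{P_X}}{\fvgen_i}=0$ and $\ip{\fvgen_i}{\fvgen_j}=\delta_{ij}$; the $\{\fvgen_i\}$ are orthonormal and orthogonal to $\sqrt{P_X}$. Writing $P_j(x)=P_X(x)+\sqrt{P_X(x)}\,\phi_j(x)$ and collecting the per-sample mean and covariance of $f^k(X)$ under $P_j$ into $\bmu_j\in\reals^k$ and $\bLa_j\in\reals^{k\times k}$, the zero-mean condition annihilates the reference contribution and leaves
\begin{equation*}
(\bmu_j)_i=\ip{\phi_j}{\fvgen_i}=O(\eps),\qquad \bLa_j=\bI+O(\eps),
\end{equation*}
since $\norm{\phi_j}\le\eps$ under either hypothesis.

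Next I would invoke Cramér's theorem: the optimal decision based on the empirical mean $h^k=\frac1n\sum_l f^k(x_l)$ has error exponent $E_{h^k}=\min_{\ba}\max\{I_1(\ba),I_2(\ba)\}$, where $I_j(\ba)=\sup_{\bth}\{\bth^\T\ba-\Lambda_j(\bth)\}$ is the Legendre transform of the cumulant generating function $\Lambda_j(\bth)=\log\E_{P_j}[e^{\bth^\T f^k(X)}]$. A second-order local expansion gives $\Lambda_j(\bth)=\bth^\T\bmu_j+\tfrac12\bth^\T\bLa_j\bth+o(\eps^2)$ on the relevant scale, reducing the min--max to the Chernoff information of two Gaussians with common leading covariance $\bLa=\bI+O(\eps)$ and mean gap $\bmu_1-\bmu_2$, whose value is the standard $E_{h^k}=\tfrac18(\bmu_1-\bmu_2)^\T\bLa^{-1}(\bmu_1-\bmu_2)+o(\eps^2)$. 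Since $\bmu_1-\bmu_2=O(\eps)$, replacing $\bLa^{-1}$ by $\bI$ perturbs the quadratic form only at order $o(\eps^2)$, so
\begin{equation*}
E_{h^k}=\tfrac18\norm{\bmu_1-\bmu_2}^2+o(\eps^2)=\tfrac18\sum_{i=1}^k\ip{\phi_1-\phi_2}{\fvgen_i}^2+o(\eps^2)=\sum_{i=1}^k E_{h_i},
\end{equation*}
which is exactly \eqref{eq:mm-exp-k}--\eqref{eq:exponent-k}. The additive decomposition over $i$ is precisely the diagonalization afforded by $\bLa\to\bI$, i.e.\ by the uncorrelatedness \eqref{eq:f:uncorr}; a non-identity leading covariance would couple the components and spoil additivity already at order $\eps^2$.

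The main obstacle I anticipate is rigor in the two interleaved limits: the exponent is defined by $n\to\infty$, whereas the expansion is in $\eps\to0$, so I must ensure the second-order Taylor remainders in $\Lambda_j$ and in the Legendre transform are $o(\eps^2)$ uniformly enough that the large-$n$ saddle point remains in the quadratic region. Concretely, this means controlling the third-order cumulants of $f^k(X)$ and the $O(\eps)$ drift of $\bLa_j$ away from $\bI$, and checking that the saddle point of the Chernoff problem sits at $\bth=O(1)$ so these terms enter only the $o(\eps^2)$ remainder; the boundedness of the $f_i$ on the finite alphabet $\X$ makes these estimates routine but is where the care is needed.
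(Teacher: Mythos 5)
Your proposal is correct and reaches the paper's conclusion, but it runs the large-deviations computation in the space of the statistic rather than in the probability simplex. The paper invokes Cram\'er/Sanov in the form $E_j(\la)=\min_{P\in\cS(\la)}D(P\|P_j)$, identifies the minimizer as the I-projection of $P_j$ onto the linear family matching the moments of $f^k$ (so the optimizer lives in the exponential family tilted by $f^k$), and then evaluates the divergence with the local approximation $D\approx\tfrac12\|\cdot\|^2$ in information-vector coordinates; uncorrelatedness enters through the orthonormality of the $\fvgen_i$ when expanding $\bigl\|\sum_i\theta_i^*\fvgen_i\bigr\|^2$. You instead take the Legendre-transform form $I_j(\ba)=\sup_{\bth}\{\bth^\T\ba-\Lambda_j(\bth)\}$, expand the cumulant generating function to second order, and land on the Chernoff information of two Gaussians with common covariance $\bI+O(\eps)$; uncorrelatedness enters as diagonality of that covariance. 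These are dual descriptions of the same saddle point, and both hinge on the same two facts you and the paper each establish: $\Ed{P_j}{f_i(X)}=\ip{\phi_j}{\fvgen_i}$ and the validity of a quadratic truncation in the local regime. What your route buys is a more transparent explanation of \emph{why} the exponent is additive over $i$ (the covariance is the identity to leading order); what the paper's route buys is consistency with its information-geometry machinery, reusing the local K-L approximation and the $P\corresp\phi\corresp L$ correspondence it needs elsewhere. One slip in your rigor paragraph: the saddle point of the Chernoff problem sits at $\bth^*=\bLa^{-1}(\ba^*-\bmu_j)+o(\eps)=O(\eps)$, not $O(1)$; this is in fact what makes the third-cumulant terms $O(\|\bth\|^3)=O(\eps^3)=o(\eps^2)$ and hence negligible --- if the tilt were $O(1)$ the quadratic truncation of $\Lambda_j$ would not be justified. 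This matches the paper's own finding that ${\theta^k}^*=O(\eps)$, so the correction does not change your conclusion.
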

\begin{proof}
  Since the rule is to decide based on comparing the projection
  \begin{equation*}
    \sum_{i = 1}^k h_i\bigl( \Ed{P_1}{f_i(X)} - \Ed{P_2}{f_i(X)}\bigr)
  \end{equation*}
  to a threshold, via Cram\'er's theorem \cite{dembo2010large} the error exponent under $P_j~(j = 1, 2)$ is
  \begin{equation}
    E_j(\la) = \min_{P \in \cS(\lambda)} D(P\|P_j),
    \label{eq:exp:opt}
  \end{equation}
  where
  \begin{align}
    &\cS(\la) \defeq \Bigl\{ P\in\simpX \colon \notag\\
    &\qquad\bEd{P}{f^k(X)} =  \la \, \bEd{P_1}{f^k(X)} + (1-\la) \, \bEd{P_2}{f^k(X)} \Bigr\}.
    \label{eq:cSk-def}
  \end{align}
  Now since \eqref{eq:f:0mean} holds, we obtain
  \begin{align} 
    \Ed{P_j}{f_i(X)} 
    &= \sum_{x\in\X} P_j(x)\, f_i(x) \notag\\
    &= \sum_{x\in\X} P_X(x)\, f_i(x) + \sum_{x\in\X}(P_j(x) - P_X(x))f_i(x)  \notag\\
    &= \Ed{P_X}{f_i(X)} + \sum_{x\in\X}  \sqrt{P_X(x)}\,\phi_j(x) \cdot  \frac{\fvgen_i(x)}{\sqrt{P_X(x)}} \notag\\ 
    &= \sum_{x\in\X} \phi_j(x)\,\fvgen_i(x) \notag\\
    &= \ip{\phi_j}{\fvgen_i}, \quad j=1, 2~\text{and}~ i =1, \ldots, k,
      \label{eq:h-mean}
  \end{align}
  which we express compactly as
  \begin{equation*}
    \bEd{P_j}{f^k(X)} = \ip{\phi_j}{\fvgen^k}, \quad j = 1, 2
  \end{equation*}
  with $\fvgen^k \defeq (\fvgen_1, \dots, \fvgen_k)$.

  Hence, the constraint \eqref{eq:cSk-def} is expressed in information vectors as
  \begin{equation*}
  \ip{\phi}{\fvgen_i} = \ip{\la \, \phi_1 + (1-\la) \, \phi_2}{\fvgen_i},
  \quad i = 1, \cdots, k,
\end{equation*}
i.e.,
\begin{equation}
  \bip{\phi}{\fvgen^k} = \bip{\la \, \phi_1 + (1-\la) \, \phi_2}{\fvgen^k}.
  \label{eq:linear:family}  
\end{equation}

  In turn, the optimal $P$ in \eqref{eq:exp:opt}, which we denoted by $P^*$, lies in the exponential family through $P_j$ with natural statistic $f^k(x)$, i.e., the $k$-dimensional family whose members are of the form
  \begin{equation*}
    \log \Pt_{\theta^k}(x) = \sum_{i = 1}^k \theta_i f_i(x) + \log P_j(x) - \alpha\bigl(\theta^k\bigr),
  \end{equation*}
  for which the associated information vector is
  \begin{equation}
    \phit_{\theta^k}(x) =  \sum_{i = 1}^k \theta_i \xi_i(x) + \phi_j(x) - \alpha(\theta^k) \sqrt{P_X(x)} + o(\eps),
    \label{eq:16}
  \end{equation}
  where we have used the fact that 
  \begin{align*}
    \log P(x) &= \log P_X(x) + \log \frac{P(x)}{P_X(x)}\\ 
    &= \log P_X(x) + \log \left(1 + \frac{1}{\sqrt{P_X(x)}}\phi(x)\right)\\
    &= \log P_X(x) + \frac{1}{\sqrt{P_X(x)}}\phi(x) + o(\eps)
  \end{align*}
  for all $P \in \nbhd_\eps^\X(P_X)$ with the information vector $\phi \leftrightarrow P$.
  As a result, 
  \begin{equation*}
    \langle\phit_{\theta^k}, \xi_i\rangle = \theta_i + \langle\phi_j, \xi_i\rangle + o(\eps),
  \end{equation*}
  where we have used \eqref{eq:f:uncorr}. Hence, via \eqref{eq:linear:family} we obtain that the intersection with the linear family \eqref{eq:cSk-def} is at $P^* = P_{{\theta^k}^*}$ with
  \begin{equation*}
    \theta_i^* = \langle\lambda\phi_1 + (1-\lambda)\phi_2 - \phi_j, \xi_i\rangle + o(\eps)
  \end{equation*}
  and thus
  \begin{subequations}
    \begin{align}
      E_j(\la)
      &= D(P^* \| P_j)\notag\\
      &= \frac12 \bigl\|\phit_{\theta^k} - \phi_j\bigr\|^2 + o(\eps^2) \label{eq:13}\\
      &= \frac12 \Biggl\| \sum_{i = 1}^k \theta_i^*\xi_i\Biggr\|^2 + \frac{1}{2}\alpha\bigl({\theta^{k}}^*\bigr)^2 + o(\eps^2)\label{eq:14}\\
      &= \frac12 \sum_{i = 1}^k (\theta_i^*)^2 + \frac12 \alpha\bigl({\theta^k}^*\bigr)^2 + o(\eps^2)\label{eq:15}\\
      &= \frac{1}{2} \sum_{i = 1}^k \ip{\lambda\phi_1 + (1 - \lambda)\phi_2 - \phi_j}{\xi_i}^2 + o(\eps^2),\label{eq:20}
    \end{align}
  \end{subequations}
  where to obtain \eqref{eq:13} we have exploited the local approximation of K-L divergence \cite{huang2017information}, to obtain \eqref{eq:14} we have exploited \eqref{eq:16}, to obtain \eqref{eq:15} we have again exploited \eqref{eq:f:uncorr}, and to obtain \eqref{eq:20} we have used that
  \begin{equation*}
    \alpha\bigl({\theta^k}^*\bigr) = o(\eps^2)
  \end{equation*}
  since ${\theta^{k}}^* = O(\eps)$ and
  \begin{equation*}
    \alpha(0) = 0, \quad\text{and}\quad \nabla\alpha(0) = \bEd{P_j}{f^k(X)} = \ip{\phi_j}{\xi^k} = O(\eps).
  \end{equation*}
  Finally, $E_1(\lambda) = E_2(\lambda)$ when $\lambda = 1/2$, so the overall error probability has exponent \eqref{eq:exponent-k}.
\end{proof}

Then, the following lemma demonstrates a property of information vectors in a Markov chain.
\begin{lemma}
  \label{lem:markov:iv}
  Given the Markov relation $X\leftrightarrow Y\leftrightarrow V$ and any $v \in \cV$, let $\bphi_v^{X|V}$ and $\bphi_v^{Y|V}$ denote the associated information vectors for $P_{X|V}(\cdot|v)$ and $P_{Y|V}(\cdot|v)$, then we have
  \begin{equation}
    \bphi^{X|V}_v = \dtm^\T{\bphi}^{Y|V}_v.
    \label{eq:infovec:x-y}
  \end{equation}

\end{lemma}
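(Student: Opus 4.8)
The plan is to verify the identity \eqref{eq:infovec:x-y} entrywise by direct computation, exploiting the definitions of the information vectors and the CDM together with the Markov structure; the result is an exact linear relation, not merely a local one, so no $\eps$-approximation is needed. First I would write out the $x$th component of the right-hand side. By the definition \eqref{eq:dtm-def}, the $(x,y)$th entry of $\dtm^\T$ is $\dtm(y,x) = (P_{XY}(x,y) - P_X(x)P_Y(y))/\sqrt{P_X(x)P_Y(y)}$, while from \eqref{eq:iv-def} (taken with respect to the reference $P_Y$) the $y$th entry of $\bphi_v^{Y|V}$ is $\phi_v^{Y|V}(y) = (P_{Y|V}(y|v) - P_Y(y))/\sqrt{P_Y(y)}$. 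Multiplying and summing over $y$, the factor $\sqrt{P_X(x)}$ pulls out and the two powers of $P_Y(y)$ in the denominators combine, so that after recognizing $(P_{XY}(x,y) - P_X(x)P_Y(y))/P_Y(y) = P_{X|Y}(x|y) - P_X(x)$ I obtain
\begin{equation*}
\bigl(\dtm^\T \bphi_v^{Y|V}\bigr)(x) = \frac{1}{\sqrt{P_X(x)}} \sum_{y\in\Y} \bigl( P_{X|Y}(x|y) - P_X(x) \bigr)\bigl( P_{Y|V}(y|v) - P_Y(y) \bigr).
\end{equation*}

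Next I would expand the product inside the sum into four terms and evaluate each marginalization separately, using $\sum_y P_{Y|V}(y|v) = \sum_y P_Y(y) = 1$ and $\sum_y P_{X|Y}(x|y)P_Y(y) = P_X(x)$. The decisive step is the cross term $\sum_y P_{X|Y}(x|y) P_{Y|V}(y|v)$: here I would invoke the Markov relation $X \leftrightarrow Y \leftrightarrow V$, i.e. the conditional independence $X \perp V \mid Y$, which yields $P_{X|Y,V}(x|y,v) = P_{X|Y}(x|y)$ and hence $\sum_y P_{X|Y}(x|y) P_{Y|V}(y|v) = P_{X|V}(x|v)$. Collecting the four contributions, three of them reduce to $\pm P_X(x)$ and cancel, leaving $P_{X|V}(x|v) - P_X(x)$; dividing by $\sqrt{P_X(x)}$ recovers exactly $\phi_v^{X|V}(x)$ as defined in \eqref{eq:iv-def}, establishing \eqref{eq:infovec:x-y} componentwise.

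This computation is entirely mechanical, so there is no genuine obstacle beyond careful bookkeeping of the four marginalization terms. The only place where the hypothesis is actually used — and thus the conceptual heart of the lemma — is the evaluation of the cross term via $X \perp V \mid Y$; every other simplification is merely a consequence of the marginals summing to one. I would therefore emphasize that \eqref{eq:infovec:x-y} is the exact linear map carrying one information vector to the other, mirroring the dual relation $\bphi^{Y|U}_u = \dtm\,\bphi^{X|U}_u$ that appears for the complementary chain $U \leftrightarrow X \leftrightarrow Y$, with the transpose reflecting the interchange of the roles of $X$ and $Y$ relative to the attribute.
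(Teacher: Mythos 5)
Your proof is correct and follows essentially the same route as the paper's: a direct entrywise computation combining the Markov factorization $P_{X|V}(x|v) = \sum_{y} P_{X|Y}(x|y)\,P_{Y|V}(y|v)$ with the normalization of the distributions. The only cosmetic difference is that you expand $\dtm^\T\bphi^{Y|V}_v$ into four marginalization terms and cancel, whereas the paper starts from $P_{X|V}(\cdot|v)-P_X$ and absorbs the residual $\sqrt{P_X(x)P_Y(y)}$ term using the orthogonality $\sum_y \sqrt{P_Y(y)}\,\phi^{Y|V}_v(y)=0$; these are the same calculation read in opposite directions.
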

\begin{proof}
The Markov relation implies
\begin{equation*}
  P_X(x) = \sum_{y \in \Y}P_{X|Y}(x|y) P_{Y}(y),
\end{equation*}
\begin{align*}
  P_{X|V}(x|v) &= \sum_{y \in \Y}P_{X|Y,V}(x|y,v) P_{Y|V}(y|v)\\
  &= \sum_{y \in \Y}P_{X|Y}(x|y) P_{Y|V}(y|v).
\end{align*}
As a result,
\begin{equation*}
    P_{X|V}(x|v) - P_X(x) = \sum_{y \in \Y}P_{X|Y}(x|y) [P_{Y|V}(y|v) - P_Y(y)],
\end{equation*}
and the corresponding information vectors satisfy 
\begin{align}
  \phi^{X|V}_v(x)
  &= \frac{1}{\sqrt{P_X(x)}} \sum_{y \in \Y}P_{X|Y}(x|y)\sqrt{P_Y(y)}\phi^{Y|V}_v(y)\notag\\
  &= \sum_{y \in \Y} \left[\dtm (y,x) + \sqrt{P_X(x)P_Y(y)}\right]\phi^{Y|V}_v(y)\notag\\
  &= \sum_{y \in \Y} \dtm (y,x) \phi^{Y|V}_v(y)\label{eq:ortho},
\end{align}
where the last equality follows from the fact that
\begin{equation*}
  \sum_{y \in \Y}\sqrt{P_Y(y)}\phi^{Y|V}_v(y) = \sum_{y \in \Y} [P_{Y|V}(y|v) - P_Y(y)]= 0.
\end{equation*}
Finally, express \eqref{eq:ortho} in the matrix form and we obtain \eqref{eq:infovec:x-y}.
\end{proof}

In addition, the following lemma is useful for dealing with the expectation over an RIE.
\begin{lemma}
  Let $\bz$ be a spherically symmetric random vector of dimension $M$,
  i.e., for any orthogonal $\bQ$ we have $\bz\eqd \bQ\bz$. If $\bA$ is a fixed matrix of compatible dimensions, then
  \begin{equation}
    \E{\|\bz^\T \bA\|^2} = \frac{1}{M}\E{\|\bz\|^2}\frob{\bA}^2.
  \end{equation}  
  \label{lem:rie}    
\end{lemma}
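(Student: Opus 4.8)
The plan is to reduce everything to a statement about the second-moment matrix $\E{\bz\bz^\T}$ and then exploit spherical symmetry to show this matrix is isotropic. First I would rewrite the quantity of interest as a trace. Since $\bz^\T\bA$ is a row vector, its squared $\ell_2$-norm is $\|\bz^\T\bA\|^2 = \bz^\T\bA\bA^\T\bz = \tr\bigl(\bA\bA^\T\bz\bz^\T\bigr)$, using the cyclic property of the trace. Taking expectations and pulling the (constant) matrix $\bA\bA^\T$ out of the expectation by linearity gives
\begin{equation*}
\E{\|\bz^\T\bA\|^2} = \tr\bigl(\bA\bA^\T\,\E{\bz\bz^\T}\bigr).
\end{equation*}

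The key step is then to determine $\E{\bz\bz^\T}$. Spherical symmetry says $\bz\eqd\bQ\bz$ for every orthogonal $\bQ$, and therefore $\E{\bz\bz^\T} = \E{(\bQ\bz)(\bQ\bz)^\T} = \bQ\,\E{\bz\bz^\T}\,\bQ^\T$ for all orthogonal $\bQ$. Hence the symmetric matrix $\bM \defeq \E{\bz\bz^\T}$ is invariant under conjugation by every orthogonal matrix. A standard fact (provable by using permutation matrices to equate all diagonal entries and sign-flip reflections to kill all off-diagonal entries, or by diagonalizing $\bM$ and noting its eigenspaces must be rotation-invariant) forces $\bM = c\,\bI$ for some scalar $c$. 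Taking the trace pins down the constant: $\E{\|\bz\|^2} = \tr\E{\bz\bz^\T} = cM$, so $c = \tfrac1M\E{\|\bz\|^2}$.

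Combining the two displays yields
\begin{equation*}
\E{\|\bz^\T\bA\|^2} = \tr\Bigl(\bA\bA^\T\cdot\tfrac1M\E{\|\bz\|^2}\,\bI\Bigr) = \tfrac1M\E{\|\bz\|^2}\,\tr\bigl(\bA\bA^\T\bigr) = \tfrac1M\E{\|\bz\|^2}\,\frob{\bA}^2,
\end{equation*}
which is the claim. I do not expect a substantive obstacle here; the only point requiring care is the isotropy argument, i.e.\ justifying that an orthogonally-invariant symmetric matrix is a scalar multiple of the identity, and that the argument only needs invariance under orthogonal conjugation rather than the full distributional symmetry of $\bz$.
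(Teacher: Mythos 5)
Your proposal is correct and follows essentially the same route as the paper: both reduce the expectation to a trace against the second-moment (covariance) matrix of $\bz$, use orthogonal invariance to conclude that matrix is $\tfrac1M\E{\|\bz\|^2}\,\bI$, and then read off the Frobenius norm. If anything, you are slightly more careful than the paper at the isotropy step, where the paper simply asserts ``hence $\bLa_\bz$ is diagonal'' and then posits $\bLa_\bz=\lambda\bI$, whereas you sketch why orthogonal conjugation-invariance actually forces a scalar multiple of the identity.
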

\begin{proof}
  By definition we have $\bLa_\bz = \bQ\bLa_\bz \bQ^\T$ for any orthogonal $\bQ$, hence $\bLa_\bz$ is diagonal. Suppose $\bLa_\bz = \lambda\, \bI$, then from
  \begin{equation*}
    \tr\left(\bLa_\bz\right) = \E{\|\bz\|^2} = \lambda M
  \end{equation*}
  we obtain
  \begin{equation*}
    \lambda = \frac{1}{M} \tr\left(\bLa_\bz\right).
  \end{equation*}  
  As a result,
  \begin{equation}
     \begin{aligned}
      \E{\|\bz^\T\bA\|^2} &= \tr\left(\bA^\T\bLa_\bz\bA\right) = \lambda \tr\left(\bA^\T\bA\right)\\
      &= \frac{1}{M}\E{\|\bz\|^2}\frob{\bA}^2.
   \end{aligned}
  \end{equation}
\end{proof}

We now have everything to prove~\thmref{thm:UFS}.
\begin{proof}[Proof of \thmref{thm:UFS}]

By definition of feature functions, we have $\bEd{P_X}{f_i(X)} = 0, i = 1, \dots, k$. 
Suppose $\bm{f}$ is the vector representation of $f^k$ and denote by $\tilde{\bm{f}} \defeq \bLa_{\bm{f}}^{-1/2}\bm{f}$ the normalized $\bm{f}$, with $\bLa_{\bm{f}}^{1/2}$ denoting any square root matrix of $\bLa_{\bm{f}}$, then the corresponding statistics $\tilde{f}^k = (\tilde{f}_1, \dots, \tilde{f}_k)$ satisfy the constraints \eqref{eq:f:normal}. Further, construct the statistic $\tilde{h}^k = (\tilde{h}_1, \dots, \tilde{h}_k)$ as [cf. \eqref{eq:17}]
  \begin{equation}
    \tilde{h}_i = \frac1n \sum_{l=1}^n \tilde{f}_i(x_l), \quad i =  1, \ldots , k.
  \end{equation}
  Then, from \lemref{lemma:exponent:multi}, the error exponent of distinguishing $v$ and $v'$ based on $\tilde{h}^k$ is
\begin{align*}
  E_{\tilde{h}^k}(v, v')
  &= \frac{1}{8}\sum_{i = 1}^{k}\left[ \big({\bphi}^{X|V}_v - {\bphi}^{X|V}_{v'}\big)^\T \tilde{\bxi}^X_i \right]^2 + o(\eps^2)\\
  &= \frac{1}{8}\left\| \big({\bphi}^{X|V}_v - {\bphi}^{X|V}_{v'} \big)^\T \tilde{\bXi}^X\right\|^2 + o(\eps^2),
\end{align*}
where ${\bphi}_v^{X|V}$ denotes the associated information vector for $P_{X|V}(\cdot|v)$, $\tilde{\bxi}_i^X$ denotes the information vectors of $\tilde{f}_i$, and $\tilde{\bXi}^X \defeq [\tilde{\bxi}^X_1, \dots, \tilde{\bxi}^X_k]$. Since the optimal decision rule is linear, the error exponent is invariant with linear transformations of statistics, i.e.,
  \begin{align}
     E_{{h}^k}(v, v')
    &= E_{\tilde{h}^k}(v, v')\notag\\
    &= \frac{1}{8}\left\| \big({\bphi}^{X|V}_v - {\bphi}^{X|V}_{v'} \big)^\T \tilde{\bXi}^X\right\|^2 + o(\eps^2)\notag\\
    &= \frac{1}{8}\left\| \big({\bphi}^{Y|V}_v - {\bphi}^{Y|V}_{v'} \big)^\T \dtm\tilde{\bXi}^X\right\|^2 + o(\eps^2), \label{eq:18}
  \end{align}
  where the last equality follows from \lemref{lem:markov:iv}. Taking the expectation over a given RIE yields
  \begin{align*}
    \E{E_{{h}^k}(v, v')}
    &= \frac{1}{8}\E{\left\| \big({\bphi}^{Y|V}_v - {\bphi}^{Y|V}_{v'} \big)^\T \dtm\tilde{\bXi}^X\right\|^2} + o(\eps^2)\\
    &= \frac{\E{\big\|{\bphi}^{Y|V}_v - {\bphi}^{Y|V}_{v'}\big\|^2}}{8|\Y|}\bfrob{\dtm\tilde{\bXi}^X}^2 + o(\eps^2),
  \end{align*}
  where we have exploited \lemref{lem:rie}. Finally, the error exponent \eqref{eq:EEg-k} can be obtained via noting
\begin{equation*}
  \tilde{\bXi}^X = \bXi^X \bigl(\bigl(\rmat\bigr)^\T \rmat\bigr)^{-\frac{1}{2}}
\end{equation*}
from the definition of $\tilde{f}^k$.  

\end{proof}

\subsection{Proof of Lemma~\ref{lem:sv:local}} \label{app:2}
We first prove two useful lemmas.

\begin{lemma}
For distributions $P \in \relint(\simpX)$, $Q, R \in \simpX$, and sufficiently small $\eps$, if $D(P\|Q) \leq \eps^2$ and $D(P\|R) \leq \eps^2$, then there exists a constant $C > 0$ independent of $\eps$, such that  $D(Q\|R) \leq C\eps^2$.
  \label{lem:kl:continuous}
\end{lemma}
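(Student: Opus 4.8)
The plan is to show that the two hypotheses force both $Q$ and $R$ to lie close to $P$ in total variation, and then to exploit the strict positivity of $P$ to keep all coordinates of $Q$ and $R$ uniformly bounded away from zero, after which $D(Q\|R)$ is controlled by a quadratic quantity. First I would apply Pinsker's inequality $D(\cdot\|\cdot)\ge \tfrac12\,\|\cdot-\cdot\|_1^2$ (the precise constant, and hence the logarithm base, is irrelevant to the conclusion) to turn the hypotheses into the total-variation estimates $\|P-Q\|_1\le \sqrt2\,\eps$ and $\|P-R\|_1\le\sqrt2\,\eps$. The triangle inequality then yields $\|Q-R\|_1\le 2\sqrt2\,\eps$.

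Next I would use that $P\in\relint(\simpX)$, so that $P_{\min}\defeq\min_{x\in\X}P(x)$ is strictly positive and independent of $\eps$. For every $x$ we have $Q(x)\ge P(x)-\|P-Q\|_1\ge P_{\min}-\sqrt2\,\eps$, and likewise for $R(x)$; hence for $\eps$ small enough (say $\eps\le P_{\min}/(2\sqrt2)$) we obtain $Q(x),R(x)\ge P_{\min}/2$ for all $x$. This is the key step, and the main obstacle: the statement genuinely fails without positivity of $P$, since $D(Q\|R)$ is infinite as soon as $R$ vanishes where $Q$ does not. The role of $P\in\relint(\simpX)$ together with small $\eps$ is exactly to confine $Q$ and $R$ to a region where every coordinate is uniformly bounded below.

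Finally, I would bound the divergence by the $\chi^2$-divergence through the elementary inequality $\ln t\le t-1$, which gives $D(Q\|R)\le \chi^2(Q\|R)=\sum_{x}(Q(x)-R(x))^2/R(x)$. Combining the lower bound $R(x)\ge P_{\min}/2$ with $\|Q-R\|_2^2\le\|Q-R\|_1^2$ then produces
\[
D(Q\|R)\le \frac{2}{P_{\min}}\,\|Q-R\|_2^2\le \frac{2}{P_{\min}}\,\|Q-R\|_1^2\le \frac{2}{P_{\min}}\,(2\sqrt2\,\eps)^2=\frac{16}{P_{\min}}\,\eps^2,
\]
so the claim holds with $C=16/P_{\min}$, a constant independent of $\eps$. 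The remaining work is purely the routine verification of Pinsker's inequality and the $D\le\chi^2$ comparison, both of which are standard, so I expect no difficulty beyond the positivity argument in the second step.
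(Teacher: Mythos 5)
Your proposal is correct and follows essentially the same route as the paper's own proof: Pinsker's inequality to get $\ell_1$ control, the triangle inequality, the lower bound $R(x)\ge p_{\min}/2$ from $P\in\relint(\simpX)$, and the $\chi^2$ upper bound on the divergence, arriving at the same constant $C=16/p_{\min}$. No gaps.
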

\begin{proof}
  Denote by $\|\cdot\|_1$ the $\ell_1$-distance between distributions, i.e., $\|P - Q\|_{\TV} \defeq \sum_{x \in \X} |P(x) - Q(x)|$, then from Pinsker's inequality \cite{cover2012elements}, we have
  \begin{subequations}
    \begin{align}
      \|P - Q\|_{\TV} &\leq \sqrt{2 D(P\|Q)} < \sqrt{2}\eps,\label{eq:pinsker:1}\\
      \|P - R\|_{\TV} &\leq \sqrt{2 D(P\|R)} < \sqrt{2}\eps,\label{eq:pinsker:2}
    \end{align}
  \end{subequations}
  which implies
  \begin{align}
    \|Q - R\|_{\TV} \leq \|P - Q\|_{\TV} + \|P - R\|_{\TV} \leq  2\sqrt{2}\eps.
    \label{eq:23}
  \end{align} 
  In addition, with the notation $p_{\min}\defeq \min_{x \in \cX} P(x)$, for all $x \in \X$ we have
  \begin{subequations}
  \begin{align}
    R(x) &> P(x) - |P(x) - R(x)|    \label{eq:19:1}\\
         &> \min_{x \in \cX} P(x) - \sqrt{2}\eps \label{eq:19:2}\\
         &= p_{\min} - \sqrt{2}\eps, \label{eq:19:3}
  \end{align}
\end{subequations}
where to obtain \eqref{eq:19:2} we have used \eqref{eq:pinsker:2}. Note that $p_{\min} > 0$ since $P \in \relint(\simpX)$, thus $R(x) > p_{\min}/2$ for sufficiently small $\eps$. As a result,
  \begin{subequations}
    \begin{align}
      D(Q\|R)
        &\leq \sum_{x \in \X} \frac{(Q(x) - R(x))^2}{R(x)}\label{eq:21:1}\\
        &\leq \frac{2}{p_{\min}} \sum_{x \in \X} [Q(x) - R(x)]^2\label{eq:21:2}\\
        &\leq \frac{2\|Q - R\|_{\TV}^2}{p_{\min}} \label{eq:21:3}\\
        &\leq \frac{16}{p_{\min}}\eps^2,\label{eq:21:5}
    \end{align}
  \end{subequations}
  where to obtain \eqref{eq:21:1} we have applied an upper bound of K-L divergence \cite{polyanskiy2014lecture}, 
  and to obtain \eqref{eq:21:5} we have used \eqref{eq:23}.
\end{proof}

\begin{lemma}
  For all $(x, y) \in \X\times \Y$, we have  
    \begin{align*}
    &\eqspace D(P_X P_Y \| P_X \, \Pt^{(\weight,\bias)}_{Y|X})\\
    &\geq P_X(x) \log \left[P_{Y}(y) e^{\svpd(x, y)} + (1 - P_Y(y))e^{-\frac{P_Y(y)}{1 - P_Y(y)}\svpd(x, y)}\right]
  \end{align*}
  where $\Pt^{(\weight,\bias)}_{Y|X}$ is defined in \eqref{eq:softmax} and $\svpd(x, y)$ is defined as $\svpd(x, y) \defeq \weightp^\T(y)\rep(x) + \biasdp(y)$.
  \label{lem:kl:pxpy-q:bound}
\end{lemma}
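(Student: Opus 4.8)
The plan is to reduce this multi-class divergence to a single convex minimization in one variable. First I would rewrite the softmax model purely in terms of $\svpd$. Using $\bias(y) = \biasd(y) + \log P_Y(y)$ together with $\weight(y) = \weightp(y) + \Ed{P_Y}{\weight(Y)}$ and $\biasd(y) = \biasdp(y) + \Ed{P_Y}{\biasd(Y)}$, the exponent $\weight^\T(y)\rep(x) + \bias(y)$ splits as $\svpd(x,y) + \log P_Y(y) + c(x)$, where $c(x)$ depends on $x$ only. Since $c(x)$ is independent of $y$, it cancels in the softmax normalization, leaving
\begin{equation*}
\Pt_{Y|X}^{(\weight,\bias)}(y|x) = \frac{P_Y(y)\, e^{\svpd(x,y)}}{\sum_{y'\in\Y} P_Y(y')\, e^{\svpd(x,y')}}.
\end{equation*}

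Next I would decompose the divergence. Because both distributions share the marginal $P_X$,
\begin{equation*}
D(P_X P_Y \| P_X \Pt_{Y|X}^{(\weight,\bias)}) = \sum_{x'\in\X} P_X(x')\, D\bigl(P_Y \| \Pt_{Y|X}^{(\weight,\bias)}(\cdot|x')\bigr).
\end{equation*}
The crucial observation is that $\Ed{P_Y}{\svpd(x,Y)} = 0$ for every $x$, since $\Ed{P_Y}{\weightp(Y)} = 0$ and $\Ed{P_Y}{\biasdp(Y)} = 0$ by the definition of the mean-removed variables. Substituting the simplified softmax into the inner divergence and using this fact collapses the linear term, so that $D\bigl(P_Y \| \Pt_{Y|X}^{(\weight,\bias)}(\cdot|x)\bigr)$ reduces to $\log\bigl(\sum_{y'} P_Y(y') e^{\svpd(x,y')}\bigr)$, which I denote $\log Z(x)$.

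The main step is then to lower-bound $\log Z(x)$ by freezing $\svpd(x,y)$ at the single pair of interest and minimizing over the remaining coordinates. Writing $t = \svpd(x,y)$, $p = P_Y(y)$, and $t_{y'} = \svpd(x,y')$, I minimize the convex function $Z = \sum_{y'} P_Y(y') e^{t_{y'}}$ subject to the affine constraints $t_y = t$ and $\sum_{y'} P_Y(y') t_{y'} = 0$. A Lagrange-multiplier computation forces all free coordinates $t_{y'}$, $y'\neq y$, to be equal at the optimum; the mean-zero constraint then fixes their common value to $-\tfrac{p}{1-p}\,t$, so the minimum of $Z$ equals $p\,e^{t} + (1-p)\,e^{-\frac{p}{1-p}t}$. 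Convexity of $Z$ over the affine feasible set guarantees this stationary point is the global minimum, giving $\log Z(x) \ge \log\bigl[P_Y(y)e^{\svpd(x,y)} + (1-P_Y(y))e^{-\frac{P_Y(y)}{1-P_Y(y)}\svpd(x,y)}\bigr]$.

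Finally, since every summand $P_X(x')\, D(\cdot\|\cdot)$ is nonnegative, I retain only the $x'=x$ term to obtain the stated bound. The one point demanding care is verifying the two reductions that make the argument clean, namely the cancellation of the $x$-only offset $c(x)$ and the vanishing of $\Ed{P_Y}{\svpd(x,Y)}$, since together they are exactly what turn the divergence into $\log Z(x)$ equipped with the single linear constraint $\sum_{y'} P_Y(y') \svpd(x,y') = 0$; the constrained minimization itself is then routine convex optimization.
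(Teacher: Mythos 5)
Your proposal is correct and follows essentially the same route as the paper: rewrite the softmax so that only $\svpd$ appears, use $\Ed{P_Y}{\svpd(x,Y)}=0$ to reduce the divergence to $\sum_{x'}P_X(x')\log\bigl(\sum_{y'}P_Y(y')e^{\svpd(x',y')}\bigr)$, keep the single nonnegative $x$-term, and bound the partition sum from below by $P_Y(y)e^{\svpd(x,y)}+(1-P_Y(y))e^{-\frac{P_Y(y)}{1-P_Y(y)}\svpd(x,y)}$. The only cosmetic difference is that you obtain this last bound by a Lagrange-multiplier minimization of the convex partition sum over the free coordinates, whereas the paper applies Jensen's inequality to the $y'\neq y$ block directly --- these are the same inequality.
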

\begin{proof}
First, we can always 
use $(\weightp, \biasdp)$ to replace $(\weight, \bias)$, since
  \begin{align}
    \Pt^{(\weight,\bias)}_{Y|X}(y|x) &= \frac{e^{\weight^\T(y)\rep(x) + \bias(y)}}{\sum_{y' \in \Y}e^{\weight^\T(y')\rep(x) + \bias(y')}}\notag\\
    &= \frac{P_Y(y)e^{\weight^\T(y)\rep(x) + \biasd(y)}}{\sum_{y' \in \Y}P_Y(y)e^{\weight^\T(y')\rep(x) + \biasd(y')}}\notag\\
    &= \frac{P_Y(y)e^{\weightp^\T(y)\rep(x) + \biasdp(y)}}{\sum_{y' \in \Y}P_Y(y)e^{\weightp^\T(y')\rep(x) + \biasdp(y')}}\notag\\
    &= \frac{P_Y(y)e^{\svpd(x, y)}}{\sum_{y' \in \Y}P_Y(y)e^{\svpd(x, y')}}.\label{eq:11}
\end{align}

Then the K-L divergence $ D(P_X P_Y \| P_X \, \Pt^{(\weight,\bias)}_{Y|X})$ can be expressed as
\begin{align}
  &\eqspace D(P_X P_Y \| P_X \, \Pt^{(\weight,\bias)}_{Y|X})\notag\\    
  &= \sum_{(x, y) \in \X \times \Y} P_X(x)P_Y(y) \log \frac{\sum_{y' \in \Y}P_Y(y)e^{\svpd(x, y')}}{e^{\svpd(x, y)}}\notag\\
  &= \sum_{x \in \X} P_X(x) \log \Bigg[\sum_{y' \in \Y}P_Y(y)e^{\svpd(x, y')} \Bigg] - \Ed{P_XP_Y}{\svpd(X, Y)}\notag\\
  &= \sum_{x \in \X} P_X(x) \log \Bigg[\sum_{y' \in \Y}P_Y(y)e^{\svpd(x, y')} \Bigg],  \label{eq:kl:pxpy:q}
\end{align}
where to obtain the last equality we have used the fact $\Ed{P_XP_Y}{\svpd(X, Y)} = 0$. As a result, we have
\begin{subequations}
  \begin{align}
    &\eqspace D(P_X P_Y \| P_X \, \Pt^{(\weight,\bias)}_{Y|X})\\
    &\geq P_X(x) \log \Bigg[\sum_{y' \in \Y}P_Y(y')e^{\svpd(x, y')} \Bigg]\\
    &\geq P_X(x) \log \left[P_{Y}(y) e^{\svpd(x, y)} + (1 -
      P_Y(y))e^{-\frac{P_Y(y)}{1 - P_Y(y)}\svpd(x, y)}\right],\label{eq:kl:pxpy-q:bound:2}
  \end{align}
  \label{eq:kl:pxpy-q:bound}
\end{subequations}
where 
\eqref{eq:kl:pxpy-q:bound:2}
follows from 
Jensen's inequality:
   \begin{align*}
     &\eqspace\sum_{y' \in \Y}P_Y(y')e^{\svpd(x, y')}\\
     &= P_Y(y)e^{\svpd(x, y)} + (1 - P_Y(y))\sum_{y' \neq y}\frac{P_Y(y')}{1 - P_Y(y)}e^{\svpd(x, y')}\\
     &\geq P_Y(y)e^{\svpd(x, y)} \\
     &\qquad+ (1 - P_Y(y)) \exp\Bigg(\frac{1}{1 - P_Y(y)}\sum_{y' \neq y}P_Y(y')\svpd(x, y')\Bigg)\\
     &= P_{Y}(y) e^{\svpd(x, y)} + (1 -
       P_Y(y))e^{-\frac{P_Y(y)}{1 - P_Y(y)}\svpd(x, y)}.
   \end{align*}
\end{proof}

With the above lemmas, \lemref{lem:sv:local} can be proved as follows.
\begin{proof}[Proof of \lemref{lem:sv:local}]  
  Note that when $\weight = \biasd = 0$, we have $\Pt^{(\weight,\bias)}_{Y|X} = P_Y$. As a result, the optimal $\weight, \biasd$ for \eqref{eq:KL_softmax} satisfy 
  \begin{equation}
    \begin{aligned}
     &\eqspace D( P_{XY} \| P_X \, \Pt^{(\weight,\bias)}_{Y|X})\\
      &\leq D( P_{XY} \| P_X P_Y)\\
      &\leq \sum_{(x, y) \in \X \times \Y}\frac{\left[P_{X, Y}(x, y) - P_X(x)P_Y(y)\right]^2}{P_X(x)P_Y(y)}\\
      &\leq\eps^2,
\end{aligned}
\end{equation}
where the second inequality again follows from the upper bound for K-L divergence \cite{polyanskiy2014lecture}, and the last inequality follows from the definition of $\eps$-dependency.

As $P_{XY} \in \relint(\simpXY)$, from \lemref{lem:kl:continuous}, there exist $C > 0$ and $\eps_1 > 0$ such that  $D(P_X P_Y \| P_X \, \Pt^{(\weight,\bias)}_{Y|X}) < C \eps^2$ for all $\eps < \eps_1$. Furthermore, from \lemref{lem:kl:pxpy-q:bound}, for all $(x, y) \in \X \times \Y$ and $\eps \in (0, \eps_1)$, we have 
  \begin{align}
     C\eps^2
    &\geq P_X(x) \log \Bigl[P_{Y}(y) e^{\svpd(x, y)}
    + (1 - P_Y(y))e^{-\frac{P_Y(y)}{1 - P_Y(y)}\svpd(x, y)}\Bigr].
    \label{eq:4}
  \end{align}  
  Since
  \begin{align*}
    \log \Bigl[P_{Y}(y) e^{\svpd(x, y)}
       + (1 - P_Y(y))e^{-\frac{P_Y(y)}{1 - P_Y(y)}\svpd(x, y)}\Bigr].\notag\\
    = \frac{P_Y(y)}{2(1 - P_Y(y))} \svpd^2(x, y) + o(\svpd^2(x, y)),
  \end{align*}
  there exists $\delta > 0$ independent of $\eps_1$, such that for all $|\svpd(x, y)| \leq \delta$, we have
  \begin{equation}
    \begin{aligned}
      &\log \left[P_{Y}(y) e^{\svpd(x, y)} + (1 -  P_Y(y))e^{-\frac{P_Y(y)}{1 - P_Y(y)}\svpd(x, y)}\right]\\
      &\qquad
      > \frac{P_Y(y)}{2} \svpd^2(x, y).
    \end{aligned}
    \label{eq:1}
  \end{equation}

In turn, if $|\svpd(x, y)| > \delta$, we have
\begin{align*}
  &\eqspace \log \biggl[P_{Y}(y) e^{\svpd(x, y)} + (1 - P_Y(y))e^{-\frac{P_Y(y)}{1 - P_Y(y)}\svpd(x, y)}\biggr]\\
  &\geq \min\biggl\{\log \biggl[P_{Y}(y) e^\delta + (1 - P_Y(y))e^{-\frac{P_Y(y)}{1 - P_Y(y)}\delta}\biggr],\\
  &\qquad\qquad \log \biggl[P_{Y}(y) e^{-\delta} + (1 - P_Y(y))e^{\frac{P_Y(y)}{1 - P_Y(y)}\delta}\biggr]\biggr\},\\
  &\geq \frac{P_Y(y)}{2} \delta^2,
\end{align*}
where to obtain the second inequality we have exploited the monotonicity of function $P_{Y}(y) e^{t} + (1 - P_Y(y))e^{-\frac{P_Y(y)}{1 - P_Y(y)}t}$, and to obtain the third inequality we have exploited \eqref{eq:1}.

As a result, we have
\begin{align}
  &\log \left[P_{Y}(y) e^{\svpd(x, y)} + (1 -  P_Y(y))e^{-\frac{P_Y(y)}{1 - P_Y(y)}\svpd(x, y)}\right]\notag\\
  &\qquad> \frac{P_Y(y)}{2}\cdot \min\{\delta^2, \svpd^2(x, y)\},\label{eq:2}
\end{align}
hence \eqref{eq:4} becomes
\begin{align}
  C\eps^2 \geq \frac{P_X(x)P_Y(y)}{2}\cdot \min\{\delta^2, \svpd^2(x, y)\},
  \label{eq:5}
\end{align}
from which we obtain $\svpd(x, y) = O(\eps)$. Indeed, let $\eps_2 \defeq \frac{\delta}{\sqrt{2C}}\cdot\min_{(x, y)\in \cX \times \cY}\sqrt{P_X(x)P_Y(y)}$, $\eps_0 \defeq \min\{\eps_1, \eps_2\}$, then for all $\eps < \eps_0$, we have
\begin{equation*}
  C\eps^2 < \frac{P_X(x)P_Y(y)}{2}\cdot \delta^2,
\end{equation*}
and \eqref{eq:5} implies $|\svpd(x, y)| <  C'\eps$ with $ C' = \sqrt{\frac{2C}{P_X(x)P_Y(y)}}$.


\end{proof}

\subsection{Proof of Lemma~\ref{lem:local:kl}} \label{app:3}

\begin{proof}
  From \lemref{lem:sv:local}, there exists $C' > 0$ such that for all $(x, y) \in \X \times \Y$, we have
  \begin{equation}
    |\weightp^\T(y)\rep(x) + \biasdp(y)| < C'\eps,
    \label{local:1:1}
  \end{equation}
  which implies
  \begin{gather}
   |\repm^\T\weightp(y) + \biasdp(y)| < C\eps,\label{local:1:2}\\
   |\weightp^\T(y)\repp(x)| < 2C\eps, \label{local:1:3}
 \end{gather}
 with $C = \max\{C', 1\}$.

 From \eqref{eq:11}, we can assume $\Ed{P_{Y}}{\weight(Y)} = \Ed{P_{Y}}{\biasd(Y)} = 0$ without loss of generality. %
 Then \eqref{eq:softmax} can be rewritten as
\begin{equation}
  \begin{aligned}
    \Pt_{Y|X}^{(\weight,\bias)} (y|x)
    &= \frac{P_Y(y)e^{\weightp^\T(y)\rep(x) + \biasdp(y)}}{\sum_{y' \in \Y}P_Y(y')e^{\weightp^\T(y')\rep(x) + \biasdp(y')}},\\
  \end{aligned}
  \label{eq:softmax:new}
\end{equation}
and the numerator has the approximation
\begin{align*}
  &\eqspace
    P_Y(y)e^{\weightp^\T(y)\rep(x) + \biasdp(y)}\\
  &=P_Y(y)\left(1 + \weightp^\T(y)\rep(x) + \biasdp(y) + o(\eps)\right)\\
  &=P_Y(y)\left(1 + \weightp^\T(y)\rep(x) + \biasdp(y)\right) + o(\eps),
\end{align*}
where we have used \eqref{local:1:1}. Similarly, from
\begin{align*}
  &\eqspace
    \sum_{y' \in \Y}P_Y(y)e^{\weightp^\T(y)\rep(x) + \biasdp(y)}\\
  &=\sum_{y' \in \Y}P_Y(y)\left(1 + \weightp^\T(y)\rep(x) + \biasdp(y)\right) + o(\eps)\\
  &=1 + \Ed{P_Y}{\weightp^\T(Y)\rep(x)} + \Ed{P_Y}{\biasdp(y)} + o(\eps)\\    
  &=1 + o(\eps)
\end{align*}
we obtain
\begin{equation*}
\frac{1}{\sum_{y' \in \Y}P_Y(y)e^{\weightp^\T(y)\rep(x) + \biasdp(y)}} =\frac{1}{1 + o(\eps)} = 1 + o(\eps).
\end{equation*}
As a result, \eqref{eq:softmax:new} has the approximation
\begin{equation}
  \begin{aligned}
    &\eqspace
    \Pt_{Y|X}^{(\weight,\bias)} (y|x)\\
    &= \left[P_Y(y)\left(1 + \weightp^\T(y)\rep(x) + \biasdp(y)\right) + o(\eps)\right][1 + o(\eps)]\\
    &= P_Y(y)\left(1 + \weightp^\T(y)\rep(x) + \biasdp(y)\right) + o(\eps),\\
  \end{aligned} 
\end{equation}
which implies $P_X \,\Pt_{Y|X}^{(\weight,\bias)} \in \nbhd_{C\eps}^{\X\times \Y}(P_XP_Y)$ for sufficiently small $\eps$. Besides, the local assumption of distributions implies that $P_{X Y} \in \nbhd_\eps^{\X\times \Y}(P_XP_Y) \subset \nbhd_{C\eps}^{\X\times \Y}(P_XP_Y)$. Again, from the local approximation of K-L divergence \cite{huang2017information}
\begin{equation}
  D(P_1 \| P_2) = \frac{1}{2} \|\phi_1 - \phi_2\|^2 + o\big(\eps^2\big),
\end{equation}
we have
  \begin{align*}
&\eqspace D( P_{Y,X} \| P_X \, \Pt^{(\weight,\bias)}_{Y|X}) \\
&= \frac{1}{2} \sum_{x\in \X, y \in \Y} \frac{\left[P_{Y,X}(y, x) -  \Pt^{(\weight,\bias)}_{Y|X}(y|x) P_X(x)\right]^2}{P_{Y}(y)P_{X}(x)} + o(\eps^2)\\
&= \frac{1}{2} \sum_{x\in \X, y \in \Y} \left[\frac{P_{Y,X}(y, x)}{\sqrt{P_{Y}(y)P_{X}(x)}} - \sqrt{P_{Y}(y)P_{X}(x)}\right.\\
&\eqspace\quad\left. - \sqrt{P_{Y}(y)P_{X}(x)}\left(\weightp^\T(y)\rep(x) + \biasdp(y) + o(\eps)\right)\right]^2 + o(\eps^2)\\
&=  \frac{1}{2} \sum_{x\in \X, y \in \Y} \left[\dtm(y, x) - \sqrt{P_{Y}(y)P_{X}(x)}{\weightp^\T(y)\repp(x)}\right.\\
&\qquad \left.- \sqrt{P_{Y}(y)P_{X}(x)} \left(\biasdp(y) + \repm^\T\weightp(y)\right)\right.\\
  &\qquad\qquad \left.
    - \sqrt{P_{Y}(y)P_{X}(x)} o(\eps)\right]^2 + o(\eps^2)\\
&\overset{(*)}{=} \frac{1}{2} \sum_{x\in \X, y \in \Y} \left[\dtm(y, x) - \sqrt{P_{Y}(y)P_{X}(x)}\weightp^\T(y)\repp(x)\right]^2\\
&\qquad + \frac{1}{2} \sum_{x\in \X, y \in \Y} \left[\sqrt{P_{Y}(y)P_{X}(x)} \left(\biasdp(y) + \repm^\T\weightp(y)\right)\right]^2\\
&\qquad\qquad
+ o(\eps^2)\\
&=\frac{1}{2} \sum_{x\in \X, y \in \Y} \left[\dtm(y, x) - \bigl(\winf(y)\bigr)^\T\rinf(x)\right]^2 \\
 &\qquad
+ \frac{1}{2} \Ed{P_Y}{(\biasdp(y) + \repm^\T\weightp(y))^2} + o(\eps^2)\\
&= \frac{1}{2} \frob{\dtm - \wmat \bigl(\rmat\bigr)^\T}^2 + \frac{1}{2}\lossb^{ (\weight,\bias) }(\rep)  + o(\eps^2),\\
  \end{align*}
where to obtain $(*)$ we have used \eqref{local:1:2}-\eqref{local:1:3} together with the fact $|\dtm(y, x)| < \eps$, and 
\begin{gather*}
  \sum_{x\in \X, y \in \Y} \dtm(y, x)\sqrt{P_{Y}(y)P_{X}(x)}\left(\biasdp(y) + \repm^\T\weightp(y)\right) = 0,\\
  \sum_{x\in \X, y \in \Y} P_{Y}(y)P_{X}(x)\weightp^\T(y)\repp(x)\left(\biasdp(y) + \repm^\T\weightp(y)\right) = 0,
\end{gather*}
since 
$\E{\biasdp(Y)} = 0, \E{\repp(X)} = \E{\weightp(Y)} = 0$.

\end{proof}

\subsection{Proofs of Theorem~\ref{thm:softmax:ace:f} and Theorem~\ref{thm:softmax:ace:b}} \label{app:4}
\thmref{thm:softmax:ace:f} and \thmref{thm:softmax:ace:b} can be proved based on 
\lemref{lem:local:kl}.
\begin{proof}[Proofs of \thmref{thm:softmax:ace:f} and \thmref{thm:softmax:ace:b}]
Note that the value of $\biasd(\cdot)$ only affects the second term of the K-L divergence, hence we can always choose $\biasd(\cdot)$ such that $\biasdp(y) + \repm^\T\weightp(y) = 0$. Then the $(\wmat, \rmat)$ pair should be chosen as
\begin{equation}
  (\wmat, \rmat)^* = \argmin_{(\wmat, \rmat)} \bfrob{\dtm - \wmat \bigl(\rmat\bigr)^\T}^2.
\end{equation}
Set the derivative\footnote{In this paper, we use the denominator-layout notation of matrix calculus where the scalar-by-matrix derivative will have the same dimension as the matrix.}
\begin{equation}
  \frac{\p }{\p \wmat}\frob{\dtm - \wmat \bigl(\rmat\bigr)^\T}^2 = 2(\wmat\bigl(\rmat\bigr)^\T\rmat  - \dtm\rmat)
\end{equation}
 to zero, and the optimal $\wmat$ for fixed
$\rmat$ is\footnote{Here we assume the matrix $\bigl(\rmat)\bigr)^\T
  \rmat$, i.e., $\bLa_{\repp(X)}$ is invertible. For the case where
  $\bigl(\rmat\bigr)^\T \rmat$ is singular, all conclusions are the
  same when we use Moore-Penrose inverse to replace ordinary matrix
  inverse.} 
\begin{equation}
  \wmat{}^* = \dtm \rmat(\bigl(\rmat\bigr)^\T \rmat)^{-1}.
  \label{eq:9}
\end{equation}
As $\bone^\T\sqrt{\bP_Y}\, \dtm = 0$, we have
$\bone^\T\sqrt{\bP_Y}\,\wmat{}^* = 0$, which demonstrates that
$\wmat{}^*$ is a valid matrix for a zero-mean feature vector.

To express $\wmat{}^*$ of \eqref{eq:9} in the form of $\rep$ and $\weight$, we can make use of the correspondence between feature and information vectors. Note that, for a zero-mean feature function $f(X)$ with corresponding information vector $\phi$, we have the correspondence $\Ed{P_{X|Y}}{f(X)|Y}  \leftrightarrow \dtm \phi$ since the $y$-th element of information vector $\dtm \phi$ is given by
\begin{align*}
  &\eqspace\sum_{x \in \X} \dtm(y, x) \phi(x)\\
  &= \sum_{x \in \X} \frac{P_{XY}(x, y) - P_X(x)P_Y(y)}{\sqrt{P_X(x)P_Y(y)}} f(x) \sqrt{P_X(x)}\\
  &= \frac{1}{\sqrt{P_Y(y)}} \sum_{x \in \X} P_{XY}(x, y)f(x)\\
  &= \frac{1}{\sqrt{P_Y(y)}} \Ed{P_{X|Y}}{f(X)|Y = y}.
\end{align*}
Using similar methods, we can verify that $\bLa_{\repp(X)} = \bigl(\rmat\bigr)^\T \rmat$. As a result, \eqref{eq:9} is equivalent to
\begin{equation}
  \weightp^*(y) =  \Ed{P_{X|Y}}{\bLa_{\repp(X)}^{-1} \repp(X) \Bigm| Y=y}.
\end{equation}

By symmetry, the first two equations of \thmref{thm:softmax:ace:b} can be proved using the same method. To obtain the third equations of these two theorems, we need to minimize $\lossb^{ (\weight,\bias) }(\rep) = \Ed{P_Y}{(\repm^\T \weightp(Y) + \biasdp(Y))^2}$. When $\weightp$ and $\repm$ are fixed, the optimal $\biasdp$ is
\begin{equation}
  \biasdp^*(y) = -\repm^\T \weightp(Y),
  \label{eq:3}
\end{equation}
and the corresponding $\lossb^{ (\weight,\bias) }(\rep) = 0$.

When $\biasdp$ and $\weightp$ are fixed, we have
\begin{equation}
  \begin{aligned}
    &\eqspace\lossb^{ (\weight,\bias) }(\rep)\\
    &= \Ed{P_Y}{(\repm^\T \weightp(Y) + \biasdp(Y))^2}\\
    &= \repm^\T\bLa_{\weightp(Y)}\,\repm + 2\repm^\T\Ed{P_Y}{\weightp(Y)\biasdp(Y)}
    + \var(\biasdp(Y)).
  \end{aligned}
  \label{eq:lossb}
\end{equation}
Set $\frac{\p}{\p \repm} \lossb^{ (\weight,\bias) }(\rep) = 0$ and we obtain 
\begin{equation}
  \repm^* = -\bLa_{\weightp(Y)}^{-1} \Ed{P_Y}{\weightp(Y)\biasdp(Y)}.
  \label{eq:repm*}
\end{equation}
\end{proof}

\subsection{Proof of Theorem~\ref{thm:softmax:ace}} \label{app:5}
\begin{proof}
  From \lemref{lem:local:kl}, choosing the optimal $(\wmat, \rmat)$ is equivalent to solving the matrix factorization problem of $\dtm$. Since both $\wmat$ and $\rmat$ have rank no greater than $k$, from the Eckart-Young-Mirsky theorem \cite{eckart1936approximation}, the optimal choice of $\wmat \bigl(\rmat\bigr)^\T$ should be the truncated singular value decomposition of $\dtm$ with top $k$ singular values. As a result, $(\wmat, \rmat)^*$ are the left and right singular vectors of $\dtm$ corresponding to the largest $k$ singular values.
  
  The optimality of bias $\biasdp(y) = -\repm^\T \weightp(y)$ has already been shown in \appref{app:4}.
\end{proof}

\subsection{Proof of Theorem~\ref{thm:mlp:ace}} \label{app:7}
The following lemma is useful to prove Theorem~\ref{thm:mlp:ace}.
\begin{lemma}[Pythagorean theorem]
  Let $\hrmat{}^*$ be the optimal 
  matrix for given $\hwmat$ as
  defined in \eqref{eq:softmax_ace_b:1}. Then, 
  \begin{equation}
    \begin{aligned}
    &\bfrob{\dtm - \wmat \bigl(\hrmat\bigr)^\T }^2 - \bfrob{\dtm - \wmat \bigl(\hrmat{}^*\bigr)^\T }^2\\
    &\qquad= \bfrob{ \wmat\bigl(\hrmat{}^*\bigr)^\T  -  \wmat \bigl(\hrmat\bigr)^\T }^2.
  \end{aligned}
    \label{eq:tri_eq}
  \end{equation}
\label{lemma:pythagorean}
\end{lemma}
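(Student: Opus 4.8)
The plan is to read \eqref{eq:tri_eq} as the Pythagorean identity for the orthogonal projection of $\dtm$ onto the linear subspace $\mathcal{S} \defeq \{\wmat R^\T : R \in \reals^{\card{\X}\times k}\}$ of $\card{\Y}\times\card{\X}$ matrices, equipped with the Frobenius inner product $\ip{M}{N}_{\mathrm F} \defeq \tr(M^\T N)$. The matrix $\wmat(\hrmat{}^*)^\T$ is the element of $\mathcal S$ minimizing $\bfrob{\dtm - \wmat R^\T}^2$, so geometrically it is the projection of $\dtm$ onto $\mathcal S$ and the residual $\dtm - \wmat(\hrmat{}^*)^\T$ is orthogonal to $\mathcal S$. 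First I would write the decomposition
\begin{equation*}
\dtm - \wmat\bigl(\hrmat\bigr)^\T = \bigl(\dtm - \wmat(\hrmat{}^*)^\T\bigr) + \bigl(\wmat(\hrmat{}^*)^\T - \wmat\bigl(\hrmat\bigr)^\T\bigr)
\end{equation*}
and expand the left-hand squared norm of \eqref{eq:tri_eq}; it splits into $\bfrob{\dtm - \wmat(\hrmat{}^*)^\T}^2$, the right-hand side $\bfrob{\wmat(\hrmat{}^*)^\T - \wmat(\hrmat)^\T}^2$, and twice the cross term $\ip{\dtm - \wmat(\hrmat{}^*)^\T}{\,\wmat(\hrmat{}^*)^\T - \wmat(\hrmat)^\T}_{\mathrm F}$. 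It then remains only to show this cross term vanishes.

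The key step is the first-order optimality condition satisfied by $\hrmat{}^*$. From \thmref{thm:softmax:ace:b}, equation \eqref{eq:softmax_ace_b:1} gives $\hrmat{}^* = \dtm^\T\wmat\bigl((\wmat)^\T\wmat\bigr)^{-1}$, which is precisely the stationary point of $R \mapsto \bfrob{\dtm - \wmat R^\T}^2$ and rearranges to the normal equation
\begin{equation*}
\dtm^\T\wmat = \hrmat{}^*\,(\wmat)^\T\wmat.
\end{equation*}
If $(\wmat)^\T\wmat$ is singular one uses the Moore--Penrose inverse, exactly as in the footnote to \thmref{thm:softmax:ace:f}; the normal equation still holds.

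Finally I would expand the cross term using $\ip{M}{N}_{\mathrm F} = \tr(M^\T N)$ together with the factoring $\wmat(\hrmat{}^*)^\T - \wmat(\hrmat)^\T = \wmat(\hrmat{}^* - \hrmat)^\T$:
\begin{equation*}
\ip{\dtm - \wmat(\hrmat{}^*)^\T}{\wmat(\hrmat{}^* - \hrmat)^\T}_{\mathrm F}
= \tr\!\Bigl(\bigl(\dtm^\T\wmat - \hrmat{}^*(\wmat)^\T\wmat\bigr)(\hrmat{}^* - \hrmat)^\T\Bigr),
\end{equation*}
and the first factor is zero by the normal equation, so the cross term vanishes and \eqref{eq:tri_eq} follows. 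There is no serious obstacle here: the only point requiring care is the bookkeeping of the trace/transpose manipulation that isolates the factor $\dtm^\T\wmat - \hrmat{}^*(\wmat)^\T\wmat$, plus the mild caveat of inverting $(\wmat)^\T\wmat$, handled as above. Equivalently, one could simply invoke the abstract projection theorem for a finite-dimensional inner-product space, but the direct computation above is self-contained and stays in the paper's notation.
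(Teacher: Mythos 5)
Your proposal is correct and follows essentially the same route as the paper's proof: the same orthogonal decomposition of $\dtm - \wmat\bigl(\hrmat\bigr)^\T$, with the cross term killed by the normal equation $\dtm^\T\wmat = \hrmat{}^*\bigl(\wmat\bigr)^\T\wmat$ coming from \eqref{eq:softmax_ace_b:1}. The only cosmetic difference is that the paper verifies $\bigl\langle \dtm - \wmat\bigl(\hrmat{}^*\bigr)^\T,\ \wmat\bigl(\hrmat\bigr)^\T\bigr\rangle = 0$ for an arbitrary $\hrmat$ and then applies it to the difference, while you compute the cross term on the difference directly; both are the same trace manipulation.
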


\begin{proof}[Proof of Lemma \ref{lemma:pythagorean}]
  Denote by $\langle \bU, \bV  \rangle$ the Frobenius inner product of matrices $\bU$ and $\bV$, i.e., $\langle \bU, \bV \rangle \defeq \tr(\bU^\T\bV)$, and we have
    \begin{align*}
      &\eqspace
      \left\langle \dtm - \hwmat\bigl(\hrmat{}^*\bigr)^\T,
      \hwmat\bigl(\hrmat\bigr)^\T\right\rangle \\ 
      &= \tr\left(\dtm\hrmat\bigl(\hwmat\bigr)^\T\right) -
      \tr\left(\hrmat{}^*\bigl(\hwmat\bigr)^\T\hwmat
      \bigl(\hrmat\bigr)^\T\right)\\ 
      &= \tr\left(\dtm\hrmat\bigl(\hwmat\bigr)^\T\right) - \tr\left( \dtm^\T \hwmat\bigl(\hrmat\bigr)^\T\right)\\
      &= 0.
    \end{align*}
  As a result,
    \begin{align*}
      &\eqspace\bfrob{\dtm  -  \hwmat \bigl(\hrmat\bigr)^\T}^2\\
      &= \bfrob{\dtm - \hwmat\bigl(\hrmat{}^*\bigr)^\T  +
        \bigl(\hwmat\bigl(\hrmat{}^*\bigr)^\T - \hwmat
        \bigl(\hrmat\bigr)^\T \bigr)}^2\\ 
      &= \bfrob{\dtm - \hwmat\bigl(\hrmat{}^*\bigr)^\T}  +  \bfrob{\hwmat\bigl(\hrmat{}^*\bigr)^\T - \hwmat \bigl(\hrmat\bigr)^\T}^2 \\
      &\qquad + 2\left\langle \dtm - \hwmat\bigl(\hrmat{}^*\bigr)^\T, \hwmat(\bigl(\hrmat{}^*\bigr)^\T - \bigl(\hrmat\bigr)^\T)\right\rangle\\
      &= \bfrob{\dtm - \hwmat\bigl(\hrmat{}^*\bigr)^\T}  +  \frob{\hwmat\bigl(\hrmat{}^*\bigr)^\T - \hwmat \bigl(\hrmat\bigr)^\T}^2.
  \end{align*}
\end{proof}
\begin{proof}[Proof of \thmref{thm:mlp:ace}]
From Lemma~\ref{lemma:pythagorean}, we have
\begin{align*}
  &\eqspace\loss (\hrep) - \loss (\hrep^*)\\
  &= \frac{1}{2} \left[\frob{ \dtm - \wmat \bigl(\hrmat\bigr)^\T }^2 - \frob{ \dtm - \wmat \bigl(\hrmat{}^*\bigr)^\T }^2\right]\\
  &\qquad
    + \frac{1}{2} \left[\lossb^{ (\weight,\bias) }(s) - \lossb^{ (\weight,\bias) }(s^*)\right] + o(\eps^2)\\
  &= \frac{1}{2} \frob{\wmat\bigl(\hrmat{}^*\bigr)^\T  -  \wmat \bigl(\hrmat\bigr)^\T}^2 + \frac{1}{2}\lossbh^{(\weight, \bias)} (\hrep, \hrep^*) + o(\eps^2),
\end{align*}
where $\lossbh^{(\weight, \bias)} (\hrep, \hrep^*) \defeq \lossb^{ (\weight,\bias) }(s) - \lossb^{ (\weight,\bias) }(s^*)$. We then optimize $ \frob{\wmat\bigl(\hrmat{}^*\bigr)^\T  -  \wmat \bigl(\hrmat\bigr)^\T}^2$ and $\lossbh^{(\weight, \bias)} (\hrep, \hrep^*)$ separately.


For the first term, we need to express $\hrmat$ in terms of $\bhwmatv$ and $\bhrmat$. From \eqref{eq:layer_s:2} we obtain
\begin{subequations}
  \begin{gather}
    \E{\hrep_z(X)} = \actfun\left(\bhbias(z)\right) + o(\eps),\label{eq:bhbias}\\
    \hrepp_z(x) = \bhweight^\T(z)\bhrepp(x) \cdot \actfun'\left(\bhbias(z)\right) + o(\eps),
  \end{gather}
\end{subequations}
which can be expressed in information vectors as
  \begin{equation}
    \hrmat = \bhrmat\bhwmatv^\T\bhjmat + o(\eps).
    \label{eq:8}
  \end{equation}
From \thmref{thm:softmax:ace:b}, we have
  \begin{equation}
    \hrmat{}^* = \dtm^\T\,\hwmat\,\bigl(\bigl(\hwmat\bigr)^\T\hwmat\bigr)^{-1}.
    \label{eq:10}
  \end{equation}
  As a result, we have
    \begin{align}      
      &\eqspace\bfrob{\hwmat\bigl(\hrmat{}^*\bigr)^\T - \hwmat\bigl(\hrmat\bigr)^\T}^2\notag\\
      &=
      \bfrob{\bigl(\bigl(\hwmat\bigr)^\T\hwmat\bigr)^{1/2}(\bigl(\hrmat{}^*\bigr)^\T
        - \bigl(\hrmat\bigr)^\T)}^2\notag\\ 
      &=
      \Big\| \bigl(\bigl(\hwmat\bigr)^\T\hwmat\bigr)^{1/2} 
      \cdot \left(\bigl(\hrmat{}^*\bigr)^\T
        - \bhjmat\bhwmatv\bigl(\bhrmat\bigr)^\T -
        o(\eps)\right) \Big\|_{\text{F}}^2 \notag\\ 
      &=
      \Big\| \bigl(\bigl(\hwmat\bigr)^\T\hwmat\bigr)^{1/2} 
      \cdot \left(\bigl(\hrmat{}^*\bigr)^\T - \bhjmat\bhwmatv\bigl(\bhrmat\bigr)^\T\right) \Big\|_{\text{F}}^2 + o(\eps^2)\notag\\
      &=
      \Big\| \bigl(\bigl(\hwmat\bigr)^\T\hwmat\bigr)^{1/2}\bhjmat
       \notag
      \\ 
      &\qquad
      \cdot \left(\bhjmat^{-1}\bigl(\hrmat{}^*\bigr)^\T 
        - \bhwmatv\bigl(\bhrmat\bigr)^\T\right) \Big\|_{\text{F}}^2 + o(\eps^2)\notag\\
      &=\bfrob{\prior \hdtm -  \prior\bhwmatv \bigl(\bhrmat\bigr)^\T}^2
        + o(\eps^2),
        \label{eq:6}
    \end{align}
where the third equality follows from the fact that [cf. \eqref{local:1:3}] $\hrepp(x) = O(\eps)$ and $\hweightp(y) = O(1)$, and the last equality follows from the definitions $\hdtm \defeq \bhjmat^{-1}\bigl(\hrmat{}^*\bigr)^\T$ and $\prior \defeq (\bigl(\wmat\bigr)^\T\wmat)^{1/2} \bhjmat$.
  
For the second term, from \eqref{eq:lossb} and \eqref{eq:repm*}, we have
\begin{align}
  &\eqspace
    \lossbh^{(\weight, \bias)} (\hrep, \hrep^*)\notag\\
  &= [(\hrepm - \hrepsm) + \hrepsm]^\T\bLa_{\weightp(Y)} \bigl[
    (\hrepm - \hrepsm) + \hrepsm \bigr]\notag\\
  &\qquad  -\hrepsm^\T\bLa_{\weightp(Y)}\hrepsm + 2(\hrepm - \hrepsm)^\T\Ed{P_Y}{\weightp(Y)\biasdp(Y)}\notag\\
  &= (\hrepm - \hrepsm)^\T\bLa_{\weightp(Y)} \bigl(\hrepm - \hrepsm\bigr)\notag\\
  &\qquad + 2 (\hrepm - \hrepsm)^\T\left(\bLa_{\weightp(Y)}\hrepsm + \Ed{P_Y}{\weightp(Y)\biasdp(Y)}\right)\notag\\
  &= (\hrepm - \hrepsm)^\T\bLa_{\weightp(Y)}\bigl(\hrepm - \hrepsm\bigr).    \label{eq:7}
\end{align}
  Combining \eqref{eq:6} and \eqref{eq:7} finishes the proof.  
\end{proof}

The results of $\hrepm^*$ and $\bhweight^*$ can be obtained via minimizing the loss function $\loss$. Again, these two terms can be optimized separately. To obtain $\hrepm^*$, consider the case where the hidden layer has used a bounded activation function, i.e., $\actfun_{\min} \preceq \hrepm \preceq \actfun_{\max}$, such as sigmoid function $1/(1 + e^{-x})$ or $\tanh(x)$. Then the optimal $\hrepm$ is the solution of
 \begin{equation}
   \begin{aligned}
     &\underset{\hrepm}{\text{minimize}} & &(\hrepm - \hrepsm)^\T\bLa_{\weightp(Y)}\bigl(\hrepm - \hrepsm\bigr)\\
     &\text{subject to} & &\actfun_{\min} \preceq \hrepm \preceq \actfun_{\max}.\\
   \end{aligned}
   \label{eq:optmize:hrepm}
 \end{equation}
 
 If $\hrepsm$ satisfies the constraint of \eqref{eq:optmize:hrepm}, then it is the optimal solution. Otherwise, some elements of $\hrepm^*$ will become either $\actfun_{\min}$ or $\actfun_{\max}$, which is known as the saturation phenomenon.


 Further, from \eqref{eq:bhbias}, the bias $\bhbias(z)$ of hidden layer is\footnote{When $\bhrepm \neq 0$, the formula should be modified as $\bhbias(z) = \actfun^{-1}(\hrepm^*(z)) - \bhrepm^\T\bhweight + o(\eps)$.}
 \begin{equation*}
   \bhbias(z) = \actfun^{-1}(\hrepm^*(z)) + o(\eps).
 \end{equation*}
 

To obtain $\bhwmatv{}^*$, let
 \begin{equation}
   \begin{aligned}
     \hdtm' &\defeq \prior \hdtm =
     \bigl(\bigl(\hwmat\bigr)^\T\hwmat\bigr)^{-1/2} \,
     \bigl(\hwmat\bigr)^\T \, \dtm,\\  
     \bhwmatv{}' &\defeq \prior \bhwmatv =
     \bigl( \bigl(\hwmat\bigr)^\T \hwmat \bigr)^{1/2}
     \bhjmat \bhwmatv,  
   \end{aligned}
 \end{equation}
 then the optimal $\bhwmatv{}'$ 
 is the solution of
 \begin{equation}
   \underset{\bhwmatv{}'}{\operatorname*{minimize}}\quad \frob{\hdtm{}' - \bhwmatv{}'\bigl(\bhrmat\bigr)^\T }^2,\\
 \end{equation}
 i.e., 
 \begin{equation}
   \bhwmatv'^{*} = \hdtm{}' \bhrmat(\bigl(\bhrmat\bigr)^\T\bhrmat)^{-1}. 
 \end{equation}
 Hence, $\bhwmatv{}^*$ is given by
 \begin{align*}
   \bhwmatv{}^*
   &= \prior^{-1}\bhwmatv'^{*}
     = \prior^{-1} \hdtm' \bhrmat(\bigl(\bhrmat\bigr)^\T\bhrmat)^{-1}\\
   &= \hdtm \bhrmat(\bigl(\bhrmat\bigr)^\T\bhrmat)^{-1}\\
   &= \bhjmat^{-1}\cdot [\hwmat(\bigl(\hwmat\bigr)^\T\hwmat)^{-1}]^\T\dtm\, \bhrmat(\bigl(\bhrmat\bigr)^\T\bhrmat)^{-1},
 \end{align*}
 where the term $\dtm\,\bhrmat(\bigl(\bhrmat\bigr)^\T\bhrmat)^{-1}$ corresponds to a feature projection of $\bhrepp(X)$: 
 \begin{equation}
  \dtm \, \bhrmat\,\bigl(\bigl(\bhrmat\bigr)^\T\bhrmat\bigr)^{-1}
  \leftrightarrow 
  \Ed{P_{X|Y}}{\bLa_{\bhrepp(X)}^{-1} \bhrepp(X)\Bigm| Y}.
 \end{equation}
 As a consequence, this multi-layer neural network is conducting a 
  generalized feature projection between features extracted from
 different layers. In practice problems, the projected feature
 $\Ed{P_{\bhrepp|Y}}{\bLa_{\bhrepp}^{-1} \bhrepp\middle|Y}$ only
 depends on the distribution $P_{\bhrepp|Y}$, and does not depend on
 the distribution $P_{X|Y}$. Therfore, the above computations can be
 accomplished without knowing the hidden random variable $X$ and
 can be applied to general cases.







\bibliographystyle{IEEEtran}
\bibliography{icml2018,learning}

\end{document}